\def \red #1 {\textcolor{red}{#1}}
\newcommand\Algphase[1]{%
\vspace*{-.7\baselineskip}\Statex\hspace*{\dimexpr-\algorithmicindent-2pt\relax}\rule{0.49\textwidth}{0.4pt}%
\Statex\hspace*{-\algorithmicindent}\textbf{#1}%
\vspace*{-.7\baselineskip}\Statex\hspace*{\dimexpr-\algorithmicindent-2pt\relax}\rule{0.49\textwidth}{0.4pt}%
}
\newtheorem{theorem}{Result}
\newtheorem{lemma}{Lemma}
\newtheorem{Decomposition}{Decomposition}
\begin{document}

%\preprint{APS/123-QED}

\title{Direct Measurement of Density Matrices via Dense Dual Bases}% Force line breaks with \\

\author{Yu Wang}
%\email{wangyu@bimsa.cn}
\affiliation{Beijing Institute of Mathematical Sciences and Applications}

\author{Hanru Jiang}
%\email{hanru@bimsa.cn}
\affiliation{Beijing Institute of Mathematical Sciences and Applications}

\author{Yongxiang Liu}
\email{liuyx@pcl.ac.cn}
\affiliation{Peng Cheng Laboratory, Shenzhen 518055, China}

\author{Keren Li}
\email{likr@szu.edu.cn}
\affiliation{College of Physics and Optoelectronic Engineering, Shenzhen University, Shenzhen 518060, China}
\affiliation{Quantum Science Center of Guangdong-Hong Kong-Macao Greater Bay Area
(Guangdong), Shenzhen 518045. China}

\begin{abstract} 

 Efficient understanding of a quantum system fundamentally relies on the selection of observables. Pauli observables and mutually unbiased bases (MUBs) are widely used in practice and are often regarded as theoretically optimal for quantum state tomography (QST). However, Pauli observables require a large number of measurements for full-state tomography and do not permit direct measurement of density matrix elements with a constant number of observables. For MUBs, the existence of complete sets of \(d+1\) bases in all dimensions remains unresolved, highlighting the need for alternative observables. 
In this work, we introduce Dense Dual Bases (DDB), a novel set of \(2d\) observables specifically designed to enable the complete characterization of any \(d\)-dimensional quantum state. These observables offer two key advantages. First, they enable direct measurement of density matrix elements without auxiliary systems, allowing any element to be extracted using only three selected observables. Second, QST for unknown rank-\(r\) density matrices—excluding only a negligible subset—can be achieved with \(O(r \log d)\) observables, significantly improving measurement efficiency. 
As for circuit implementation, each observable is iteratively generated and can be efficiently decomposed into \(O(n^4)\) elementary gates for an \(n\)-qubit system. These advances establish DDB as a practical and scalable alternative to traditional methods, offering promising opportunities to advance the efficiency and scalability of quantum system characterization.

\end{abstract}

\maketitle

\section{Introduction}

The density matrix (DM) is a fundamental representation of a quantum state \cite{fano1957descri}, essential for understanding quantum systems. Accurately determining the DM is a central challenge in quantum science. Traditionally, this challenge is addressed through quantum state tomography (QST) \cite{smithey1993measurement,paris2004quantum}, which relies on informationally complete (IC) measurements \cite{renes2004symmetric,flammia2005} and post-processing the data to estimate the quantum state. In a \(d\)-dimensional Hilbert space, a general DM contains \(d^2 - 1\) independent parameters, necessitating at least \(d^2\) projectors in an IC positive operator-valued measurement (POVM) \cite{renes2004symmetric,Caves2002}.

Projective measurements (PMs) on \(d+1\) mutually unbiased bases (MUBs)
%, which correspond to \(d(d+1)\) projectors, 
are considered optimal IC measurements \cite{Wootters1989,adamson2010}. However, the existence of such \(d+1\) MUBs for non-prime power dimensions remains an open question in quantum information theory \cite{horodecki2022five}. Moreover, in \(n\)-qubit systems (\(d = 2^n\)), estimating all \(4^n\) Pauli expectation values involves \(3^n\) unitary operations, each followed by PM on the computational basis \cite{stricker2022experimental}. This process results in \(6^n\) distinct rank-1 projectors and exponential data storage, rendering traditional QST impractical for high-dimensional systems.

Direct measurement protocols (DMPs) were developed to reduce the resources required for measurements and simplify post-processing efforts by focusing on specific DM elements rather than reconstructing the entire density matrix \cite{lundeen2011direct,bolduc2016direct,shi2015scan,pan2019direct,salvail2013full,bamber2014observing,thekkadath2016direct}. These protocols target off-diagonal DM elements, which are crucial for capturing key quantum properties such as entanglement \cite{friis2019entanglement,horodecki2009quantum} and decoherence \cite{streltsov2017colloquium,ringbauer2018certification}. DMPs often rely on weak couplings between the main system and ancillary pointers, followed by PMs. Specifically, measuring a DM element \(\rho_{jk}\) involves tailored coupling operations \(U_j\), post-selection of the state \(|k\rangle\) on the main system, and measuring different expectation values on the ancilla system \cite{gross2015novelty}. While weak measurements are minimally invasive and easy to implement, they are inherently biased and introduce unavoidable reconstruction errors, making them less precise than QST \cite{maccone2014state,gross2015novelty}. Stronger couplings have been proposed to improve accuracy \cite{vallone2016strong,calderaro2018direct,zhang2020direct}, but their practical implementation remains challenging.

The use of ancillary pointers in DMPs adds complexity to experimental systems. 
%Achieving precise coupling between the pointer and the main system becomes increasingly challenging. 
%, particularly as the system size grows. 
Initially, a single pointer sufficed for pure states (rank-1 DMs) \cite{lundeen2011direct}, but general DMs require another pointer \cite{lundeen2012procedure,thekkadath2016direct}. 
Efforts to eliminate pointers have led to various approaches. For example, \(\delta\)-quench measurements achieve pointer-free operation for pure states in specific systems \cite{zhang2019delta}, while phase-shifting techniques enable pointer-free measurements but require \(O(d^2)\) unitary operations \cite{feng2021direct}. For multi-qudit DMs, it has been shown that a single pointer can suffice \cite{xu2024resource}. Strong measurement-based DMP \cite{calderaro2018direct} discussed the direct reconstruction of all DM elements using \(d^2\) projectors in QST without pointers, but it could introduce greater experimental complexity compared to \(O(d)\) strong coupling operations and PMs needed in DMPs.

% These raise a question: can QST of any finite dimension be achieved without ancillary pointers, using only \(O(d)\) unitary operations and PM on the computational basis? Moreover, is it possible for a constant number of these operations to directly measure each DM element?

These raise a question: Can DMP benefit the performance of QST? Moreover, is there a protocol that advances both DMP and QST? 

%In this work, we achieve this by designing a set of \(2d\) eigenbases for QST, corresponding to \(2d\) unitary operations followed by PM on the computational basis. Unlike the minimal \(d+1\) MUBs, the existence of these observables for any dimension \(d\) is ensured through a deterministic construction algorithm. This set of observables also functions as a DMP, enabling each DM element to be measured with just three observables, thereby ensuring accuracy without the need for ancillary systems. Furthermore, these observables are applied to perform QST on quantum states with prior knowledge of rank-\(r\), connecting through DMPs via matrix completion techniques. As a result, we demonstrate that for a rank-\(r\) density matrix, only \(O(r \log d)\) of the \(2d\) observables are required for full-state characterization. This approach dramatically reduces the required unitary operations, showing an exponential decrease compared to the \(O(rd \log^2 d)\) operations needed when using random Pauli observables from the \(d^2\) set via compressed sensing \cite{Gross2010}. Finally, we present a unified formula that expresses all these unitary operations, where each can be decomposed into a permutation gate followed by Pauli measurements. The permutation gate itself can be efficiently decomposed into \(O(n^4)\) gates on an \(n\)-qubit system. This not only enhances the efficiency of quantum state learning but also opens new pathways for practical implementations in high-dimensional quantum systems. 

In this work, we design a set of \(2d\) eigenbases for QST, corresponding to \(2d\) unitary operations followed by PM on the computational basis. Unlike the minimal and optimal \(d+1\) MUBs, these eigenbases are guaranteed to exist for any dimension \(d\) through a deterministic construction algorithm. Serving as a DMP, each DM element can be directly measured using three eigenbases from this set, eliminating the need for ancillary systems and ensuring accuracy through strong measurements. 
Building on this foundation, we apply these observables to perform QST on quantum states with prior knowledge of rank-\(r\), utilizing matrix completion techniques. For rank-\(r\) density matrices, only \(O(r \log d)\) out of the \(2d\) observables are required for full characterization, except for a set of zero-measure. This approach significantly reduces the operational cost compared to the \(O(rd \log^2 d)\) operations needed with random Pauli observables \cite{Gross2010}. For \(n\)-qubit systems, each unitary operation is represented by a permutation gate followed by one of \(2n\) special Pauli measurements, with the permutation gate efficiently implemented in \(O(n^4)\) gates. 
This method not only improves the efficiency of quantum state learning but also offers a scalable and practical framework for high-dimensional quantum systems, paving the way for broader applications in quantum information science.

\section{Preliminaries and Dense Dual Bases}

\textit{Observables and Projective Measurements.---}
When measuring a DM \(\rho\) with an observable \(O = \sum_{k=1}^d \lambda_k |O_k\rangle \langle O_k|\), the Born rule states that the measurement outcome \(\lambda_k\) occurs with probability \(\mathrm{tr}(\rho |O_k\rangle \langle O_k|)\). Each observable corresponds to one PM onto its eigenbasis \(\{|O_k\rangle\}_{k=1}^d\), or equivalently, to one unitary operation \(U^\dagger\) followed by PM on the computational basis \(\{|k\rangle\}_{k=1}^d\), where \(U = \sum_{k=1}^d |O_k\rangle \langle k|\). This equivalence ensures \(\mathrm{tr}(U^\dag \rho U |k\rangle \langle k|) = \mathrm{tr}(\rho |O_k\rangle \langle O_k|)\).

For \(n\)-qubit systems, the \(4^n\) Pauli observables, commonly used for QST, correspond to \(4^n\) unitary operations:
\begin{equation}
\rho = \frac{1}{2^n} \sum_{i_1,\cdots,i_n=0}^3 \text{tr}(\rho \sigma_{i_1} \otimes \cdots \otimes \sigma_{i_n}) \sigma_{i_1} \otimes \cdots \otimes \sigma_{i_n}.
\end{equation}
Experimentally, \(3^n\) distinct unitary operations (excluding \(I\) for each qubit) and PM on the computational basis suffice to extract the \(4^n\) expectation values \cite{Nielsen2002}. However, this process involves \(6^n\) distinct projectors, leading to significant redundancy compared to the \(4^n\) actually required. For arbitrary dimension \(d\), the Pauli observables generalize to \(d^2\) operators such as Gell-Mann or Heisenberg-Weyl matrices. 
%These highlight the need for efficient measurement strategies with $O(d)$ unitary operations and PM onto the computational basis in high-dimensional systems.

A promising alternative is PMs onto \(d+1\) MUBs, recognized as the minimal and optimal strategy for QST. 
Two orthonormal bases $\{|a_j\rangle\}_{j=1}^d$ and $\{|b_k\rangle\}_{k=1}^d$ are termed as mutually unbiased if $ \lvert \langle a_j \lvert b_k \rangle \rvert^2 = 1/d$ for all $j,k$. 
MUBs are widely used in quantum information applications \cite{maassen1988generalized,cerf2002security,ballester2007entropic,giovannini2013characterization,tavakoli2021mutually} and can be efficiently implemented in \(n\)-qubit systems with \(2^n+1\) circuits and PM on the computational basis \cite{seyfarth2011construction,seyfarth2015practical,yu2023effi}. 
However, for arbitrary dimension \(d\), the existence of \(d+1\) MUBs is only guaranteed for prime power dimensions. In non-prime power cases (\(d=6\), for example), strong numerical evidence suggests that only three MUBs exist \cite{brierley2009constructing,goyeneche2013mutually}. Furthermore, while MUBs are highly efficient for QST, they are not designed to directly measure individual DM elements. This limitation, combined with the uncertainty surrounding their existence for general \(d\), underscores the need for alternative strategies.

 For any dimension \(d\), Caves, Fuchs, and Schack considered using \(d^2\) rank-1 projections to directly determine all DM elements \cite{Caves2002}. The projected states are defined as:
\begin{equation}
	\mathcal{A}_d=\{|l\rangle,|\phi_{jk}^{+}\rangle,|\psi_{jk}^{+}\rangle: 0\le j<k\le d-1;~ l\in [d]\}.
	\label{dd elements}
\end{equation}
where \(|\phi_{jk}^{\pm}\rangle \doteq (|j\rangle \pm |k\rangle)/\sqrt{2}\) and \(|\psi_{jk}^{\pm}\rangle \doteq (|j\rangle \pm i|k\rangle)/\sqrt{2}\). Using at most four projectors, each DM element \(\rho_{ij}\) can be reconstructed:
\begin{eqnarray}\label{element_extract}
&&\rho_{ll}=\mbox{tr}(\rho|l\rangle\langle l|),\nonumber\\
&&\rho_{jk}=\mbox{tr}(\rho(|\phi_{jk}^{+}\rangle\langle\phi_{jk}^{+}|-i|\psi_{jk}^{+}\rangle\langle\psi_{jk}^{+}|))-\frac{1-i}{2}(\rho_{kk}+\rho_{jj}). \nonumber\\
\end{eqnarray} 
While this approach ensures informational completeness and direct reconstruction of DM elements, implementing \(O(d^2)\) projectors could be more challenging than $O(d)$ coupling operations and PMs in DMP  
\cite{calderaro2018direct}, especially for large $d$.

\section{Results and Analysis}
\emph{DDBs}---
We begin with an efficient algorithm to construct at most $2d$ eigenbases to cover the states in Eq. (\ref{dd elements}) for arbitrary dimension \(d\). 
These are referred to as dense dual bases (DDBs).
\begin{theorem}
For an arbitrary dimension \(d\), all elements of a density matrix can be directly measured using \(2d-1\) DDBs for even \(d\) and \(2d\) DDBs for odd \(d\).
\label{result1}
\end{theorem}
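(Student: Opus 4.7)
My plan is to reduce the statement to a classical combinatorial fact about edge-colorings of the complete graph $K_d$. By Eq.~\eqref{element_extract}, every entry of $\rho$ is a fixed linear combination of Born probabilities $\mathrm{tr}(\rho|v\rangle\langle v|)$ for states $|v\rangle \in \mathcal{A}_d$, so it suffices to exhibit a small family of orthonormal bases of $\mathbb{C}^d$ whose union contains $\mathcal{A}_d$ and then bound the size of that family.

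The key observation is that for each pair $(j,k)$, both $\{|\phi_{jk}^{+}\rangle,|\phi_{jk}^{-}\rangle\}$ and $\{|\psi_{jk}^{+}\rangle,|\psi_{jk}^{-}\rangle\}$ are orthonormal bases of $\mathrm{span}(|j\rangle,|k\rangle)$, whereas $|\phi_{jk}^{+}\rangle$ and $|\psi_{jk}^{+}\rangle$ are \emph{not} mutually orthogonal; the $\phi$-type and $\psi$-type states of a single pair must therefore live in different DDBs. Given any matching $\mathcal{M}$ of $K_d$ on vertex set $[d]$, I would assemble
\[
B_{\mathcal{M}}^{\phi} = \{|\phi_{jk}^{\pm}\rangle : (j,k)\in\mathcal{M}\} \cup \{|l\rangle : l \text{ uncovered by } \mathcal{M}\}
\]
and the analogous $B_{\mathcal{M}}^{\psi}$. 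Each such $B$ is a genuine orthonormal basis: within each two-dimensional block the $\pm$ pair is orthogonal, distinct matched pairs live in orthogonal subspaces, and every uncovered $|l\rangle$ is orthogonal to the entire matched part. Crucially, each $B$ contributes the $|\phi^{+}\rangle$ (or $|\psi^{+}\rangle$) states demanded by Eq.~\eqref{element_extract} for every pair of $\mathcal{M}$ and, ``for free,'' contributes the computational states $|l\rangle$ of the uncovered indices.

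I would then feed in the classical $1$-factorization of $K_d$. For even $d$, the $d(d-1)/2$ edges decompose into $d-1$ perfect matchings, yielding $2(d-1)$ bases that jointly supply every $|\phi_{jk}^{+}\rangle$ and $|\psi_{jk}^{+}\rangle$; adjoining the computational basis $\{|l\rangle\}_{l=0}^{d-1}$ to supply the diagonal states brings the total to $2d-1$. For odd $d$, $K_d$ decomposes into $d$ near-perfect matchings, each missing exactly one vertex, and a simple double count shows that every vertex is uncovered by exactly one factor. The resulting $2d$ bases therefore already contain every $|l\rangle$ without any separately adjoined computational basis, giving $2d$ DDBs in total, and Eq.~\eqref{element_extract} then reconstructs every matrix element from these measurement outcomes.

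The main obstacle is essentially bookkeeping rather than analysis: one must verify that a concrete scheme (e.g., the standard round-robin scheduling construction) really delivers the claimed $1$-factorizations in both parities, and in the odd case that each vertex is uncovered by precisely one factor so that no extra basis is needed. The auxiliary ``minus'' states $|\phi_{jk}^{-}\rangle, |\psi_{jk}^{-}\rangle$, which are never invoked in Eq.~\eqref{element_extract} but are included to complete the bases, require a brief orthogonality check, but this is immediate from the two-dimensional block structure. No analytic difficulty arises; the whole proof is the translation of the pair-decomposition of $\mathcal{A}_d$ into the language of $1$-factorizations of $K_d$.
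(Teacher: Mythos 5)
Your proof is correct, and it shares the paper's central reduction: extend $\mathcal{A}_d$ with the minus states so that each disjoint family of pairs yields two orthonormal bases (one $\phi$-type, one $\psi$-type), then cover all pairs $(j,k)$ by (near-)perfect matchings of $K_d$ — exactly what the paper calls ``bands.'' Where you genuinely diverge is in how the matchings are produced. You invoke the classical $1$-factorization of $K_d$ (round-robin scheduling) for even $d$, and for odd $d$ the near-$1$-factorization into $d$ near-perfect matchings together with a clean double count ($d$ matchings each covering $d-1$ vertices, every vertex of degree $d-1$ hence missed exactly once), which immediately explains why the uncovered singletons supply the whole computational basis for free. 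The paper instead builds the factorization by a bespoke recursive doubling algorithm (Lemma~1: merged plus crossed partitions, with a detour through $d/2+1$ when $d/2$ is odd; Lemma~2: odd $d$ obtained from even $d+1$ by deleting the vertex $d$ and leaving singletons). Your route is shorter and leans on a standard combinatorial fact, which is a legitimate economy for proving Result~1 as stated; the paper's recursive construction, however, is not incidental — its iterative block structure is what later yields the $O(r\log\frac{d}{r})$ partial-coverage bound of Result~2 and the $O(n^4)$ permutation-circuit decomposition of Result~3, neither of which would fall out of a round-robin schedule. One small omission on your side: the paper also records the matching lower bound $\binom{d}{2}/(d/2)=d-1$ bands for even $d$, establishing minimality of the count rather than mere sufficiency; your argument proves only the upper bound, which is all the theorem statement literally requires, but the optimality remark is worth adding since the minimal counting is trivial from your own matching formulation.
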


\textbf{Analysis.} Each designed DDB must satisfy two key constraints: \textit{orthogonality} (elements within a basis must be orthogonal) and \textit{completeness} (each basis must contain \(d\) elements). To achieve completeness, we extend the set \(\mathcal{A}_d\) in Eq.~(\ref{dd elements}) by adding additional elements \(\{|\phi_{jk}^{-}\rangle\}\) and \(\{|\psi_{jk}^{-}\rangle\}\), resulting in \(2d^2-d\) elements.

For \(d=2\), these six elements  
\begin{equation}
    \{|0\rangle, |1\rangle, |\phi_{01}^{+}\rangle, |\phi_{01}^{-}\rangle, |\psi_{01}^{+}\rangle, |\psi_{01}^{-}\rangle\}
\end{equation}
are the eigenstates of the Pauli observables \(Z, X, Y\). The eigenbases \(\{|\phi_{01}^{\pm}\rangle\}\) and \(\{|\psi_{01}^{\pm}\rangle\}\) correspond to \(X\) and \(Y\), respectively. For general \(d\), we arrange \( |\phi_{jk}^{\pm}\rangle\) and \( |\psi_{jk}^{\pm}\rangle\) into separate eigenbases, ensuring orthogonality by preventing overlapping pairs \(\{|\phi_{j_1k_1}^{\pm}\rangle, |\phi_{j_2k_2}^{\pm}\rangle\}\) or \(\{|\psi_{j_1k_1}^{\pm}\rangle, |\psi_{j_2k_2}^{\pm}\rangle\}\) with shared indices \(j_1 = j_2\) or \(k_1 = k_2\).

The problem reduces to a combinatorial optimization: construct all pairs \(\{(j, k) : 0 \leq j < k \leq d-1\}\) using minimal bands, where each band contains \(d\) numbers $\{0,1\cdots,d-1\}$. For even \(d\), each band forms \(  d/2  \) pairs; for odd \(d\), each band forms \((d-1)/2\) pairs and includes a single remaining element.

For even \(d\), at least \(C_d^2/(d/2) = d-1\) bands are required, corresponding to \(2(d-1)\) eigenbases. Adding the computational basis \(\mathcal{B}_0 = \{|0\rangle, \cdots, |d-1\rangle\}\), the total number of DDBs is \(2d-1\). For odd \(d\), \(d\) bands are needed, and the single remaining elements cover the computational basis, resulting in \(2d\) DDBs.

General $n$-qubit case. 
For \(d = 2^n\), we illustrate the minimal $2^n-1$ partitions of bands with $n$ iterations in Fig. (\ref{p+a}). 
\begin{figure}[!htb]
  \begin{center}
    \includegraphics[width=0.45\textwidth]{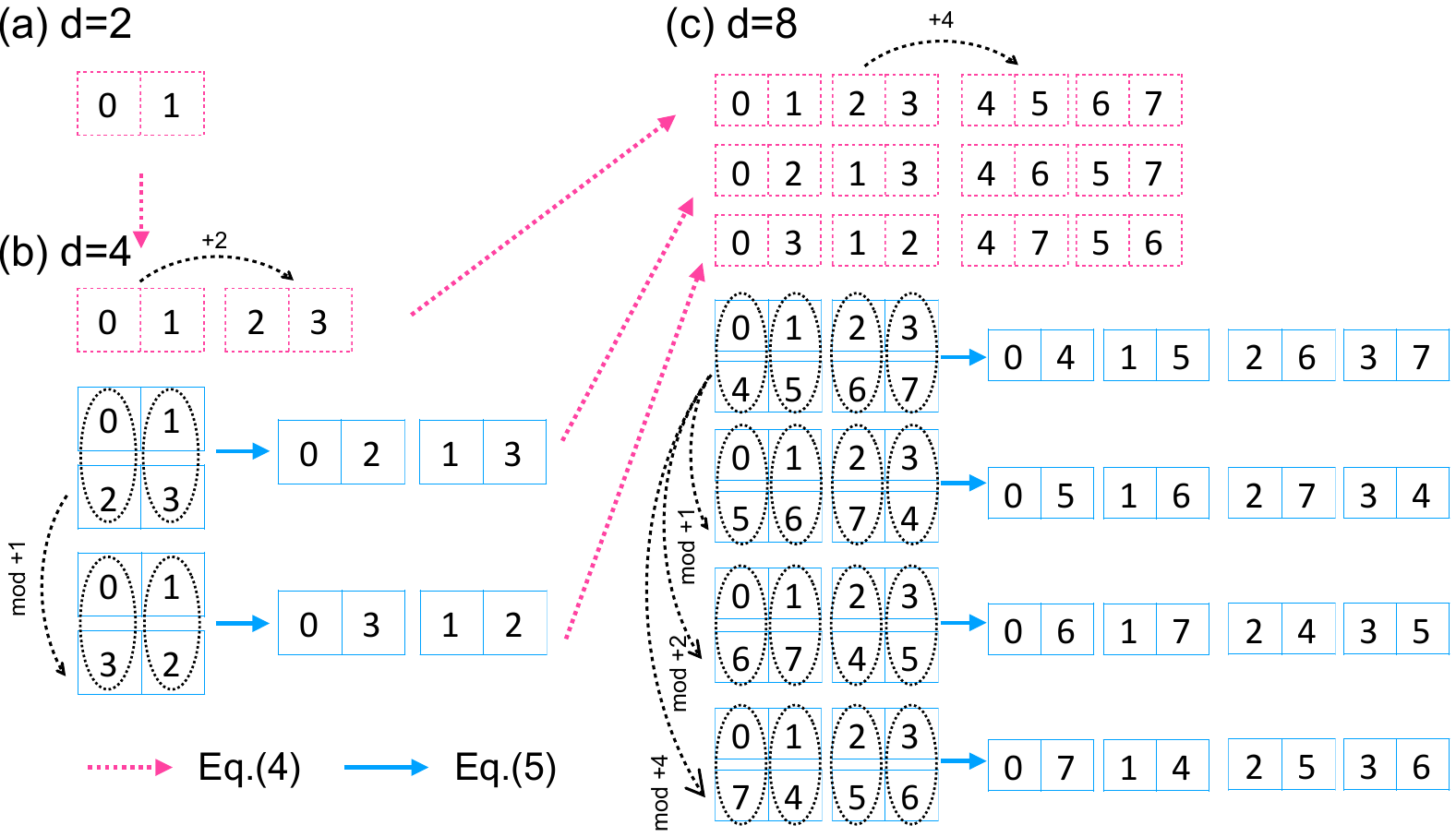}
    \caption{Optimal \(d-1\) cuttings for (a) \(d=2\), (b) \(d=4\), and (c) \(d=8\). The \(\bmod + x\) in the figure is shorthand for the calculation in Eq. (\ref{eq2}).}
 \label{p+a}
    \end{center}
\end{figure} 
For \(d=4\), the seven DDBs, corresponding to three partitions, are:
\begin{align*}
  \mathcal{B}_0^4 &= \{|0\rangle,|1\rangle,|2\rangle,|3\rangle\}, \\
  \mathcal{B}_1^4 &= \{ |\phi_{01}^{\pm}\rangle,|\phi_{23}^{\pm}\rangle\},  ~~
  \mathcal{C}_1^4 = \{ |\psi_{01}^{\pm}\rangle,|\psi_{23}^{\pm}\rangle\}, \\
  \mathcal{B}_2^4 &= \{ |\phi_{02}^{\pm}\rangle,|\phi_{13}^{\pm}\rangle\},  ~~
  \mathcal{C}_2^4 = \{ |\psi_{02}^{\pm}\rangle,|\psi_{13}^{\pm}\rangle\}, \\
  \mathcal{B}_3^4 &= \{ |\phi_{03}^{\pm}\rangle,|\phi_{12}^{\pm}\rangle\},  ~~
  \mathcal{C}_3^4 = \{ |\psi_{03}^{\pm}\rangle,|\psi_{12}^{\pm}\rangle\}.
\end{align*}

The partitions are constructed iteratively:
\begin{itemize}
    \item Base case (\(n=1\)): 
    \(
    \mathbb{T}^2 = \{(0,1)\}.
    \)
    \item  Recursive step (\(n > 1\)):
    \begin{itemize}
        \item  Merging partitions. Merge partitions from the \((n-1)\)-qubit case:
        \begin{equation}\label{eq1}
            T^{2^n}_{m_1} = T^{2^{n-1}}_{m_1} \cup (T^{2^{n-1}}_{m_1} + 2^{n-1}), \quad 1 \leq m_1 \leq 2^{n-1}-1.
        \end{equation}
        \item Crossed partitions. Define the left \(2^{n-1}\) partitions directly:
        \begin{equation}\label{eq2}
            T^{2^n}_{m_2} = \{(k, 2^{n-1} + (k + m_2) \bmod 2^{n-1}) : k \in [2^{n-1}]\},
        \end{equation}
         where $2^{n-1} \leq m_2 \leq 2^n - 1$. 
    \end{itemize}
\end{itemize}

For general even and odd $d$, an iterative algorithm guarantees complete coverage of all pairs \(\{(j, k) : 0 \leq j < k \leq d-1\}\) with minimal cost in terms of bands. Details are in Appendix A. \qed

Without ancillas, the minimal eigenbases for QST are \(d+1\) MUBs (if they exist), requiring \(d(d+1)\) projectors \cite{Wootters1989}. In contrast, DDBs are applicable for all \(d\) and produce at most \(2d^2\) projectors. Earlier DMPs \cite{lundeen2012procedure,calderaro2018direct} required four or three Pauli observables on two-pointers, leading to total \(16d^2\) or \(8d^2\) projectors, respectively. In comparison, \(2d\) (or \(2d-1\)) DDBs minimize the number of projectors needed to directly measure all DM elements. Notably, three DDBs suffice to determine each DM element, capturing both \(d\) diagonal elements and \(d/2\) off-diagonal elements.

The proposed construction algorithm uses \(\log d\) iterations to generate minimal partitions, offering scalability for large dimensions. While alternative combinatorial optimization methods, such as brute-force search, exist, the proposed approach significantly improves efficiency.

For \(d=6\) (qubit-qutrit systems), where only three MUBs exist and a fourth has not been found \cite{brierley2009constructing,raynal2011mutually}, incomplete MUBs are insufficient for QST. By contrast, 11 DDBs reconstruct all \(36 \times 36\) DM elements, as verified through numerical simulations in Appendix B.

\emph{Applications}---An important approach to reducing QST measurement resources is leveraging prior knowledge. This includes matrix product states \cite{Cramer2010,Lanyon2017}, permutation-invariant states \cite{Toth2010}, and low-rank states \cite{Gross2010,Oren2017}. For instance, compressed sensing shows that randomly selecting \(O(rd \log^2 d)\) Pauli observables suffices to recover a rank-\(r\) DM with high probability \cite{Gross2010}.

When \(r=1\), rank-1 DMs correspond to pure states. PMs onto \(3d-2\) states can uniquely determine all pure states, except for a measure-zero set \cite{flammia2005, Wang2018}. These states can be \(\{|l\rangle, |\phi_{jk}^{+}\rangle, |\psi_{jk}^{+}\rangle : 0 \le j < k \le d-1, |j-k| \le 1; l \in [d]\}\). The five eigenbases for pure state tomography constructed by Goyeneche et al. \cite{Goyeneche2015} form a minimal cover of these \(3d-2\) states and are proven rank-1 strictly complete \cite{Baldwin2016}. Rank-\(r\) strictly complete measurements uniquely determine rank-\(r\) states with high probability. No other physical states share the same measurement probability distributions, except for a dense set of rank-\(r\) states on a measure-zero set.

Using DDB measurements, three DDBs suffice to determine any specific DM element, while all \(2d\) DDBs are necessary to reconstruct the entire set of DM elements. With prior knowledge, matrix completion techniques \cite{jain2013low,koltchinskii2015optimal,acharya2016statistically,Baldwin2016} enable efficient reconstruction of the entire DM from a subset of key DM elements. This naturally leads to the question of how partial DDBs can be employed to reconstruct rank-\(r\) DMs.

\begin{theorem}
To uniquely determine a rank-\(r\) DM of dimension \(d\), \(O(r \log \frac{d}{r})\) DDBs suffice, except on a measure-zero set. Here, the rank \(r\) is assumed to be significantly smaller than the dimension \(d\) (\(r \ll d\)).
\label{result2}
\end{theorem}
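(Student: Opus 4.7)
The strategy is to interpret a subset of DDB measurements as a low-rank matrix completion problem and then appeal to the strict-completeness framework of \cite{Baldwin2016} together with low-rank sampling bounds in the spirit of \cite{Gross2010}. To set up this reduction I would first unpack the information content of $k$ DDBs: from the construction in the proof of Result~\ref{result1}, the DDBs consist of the computational basis together with $d-1$ (or $d$, for odd $d$) partitions of $\{0,\ldots,d-1\}$ into disjoint pairs, each partition giving rise to a $\phi$-band and a $\psi$-band. Applying the extraction formula \eqref{element_extract} to the $\phi$- and $\psi$-bands of a single partition, together with the computational basis, yields $\rho_{jk}$ for every pair $(j,k)$ in that partition. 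Hence selecting $k$ partitions produces a measurement operator $\mathcal{A}_k$ that exposes all $d$ diagonal entries of $\rho$ and exactly $\Theta(kd)$ specified off-diagonal entries.

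Next, I would choose the $k$ partitions uniformly at random among the $d-1$ available ones. Since each unordered pair $(j,k)$ appears in exactly one partition, this sampling has marginal probability $p=k/(d-1)$ per entry, and the union of the $k$ matchings forms a random $k$-regular subgraph of the complete graph on $d$ vertices. Setting $k=Cr\log(d/r)$ for an absolute constant $C$ then yields $\Theta(rd\log(d/r))$ revealed entries, which matches the information-theoretic threshold for rank-$r$ recovery from entry observations. A rank-$r$ strict completeness argument in the sense of \cite{Baldwin2016} then asserts that, outside a measure-zero subset of rank-$r$ density matrices where incoherence fails, the observed entries admit a unique rank-$r$ completion, namely $\rho$ itself.

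The principal obstacle is that DDB-induced sampling is not i.i.d.\ Bernoulli: entries within the same partition are always observed jointly, so the standard matrix-completion arguments do not transfer out of the box. I foresee two routes for closing this gap. (i) A direct dual-certificate construction: adapt the golfing scheme to partition-correlated sampling, exploiting the fact that a uniformly random $k$-regular graph on $d$ vertices enjoys spectral-gap concentration analogous to Erd\H{o}s--R\'enyi graphs, so that $\mathcal{A}_k$ still concentrates on the tangent space to the rank-$r$ variety. (ii) An algebraic route: rank-$r$ strict completeness is a Zariski-open condition on the parameter space of density matrices, so verifying injectivity of $\mathcal{A}_k$ at a single generic rank-$r$ state (a finite-dimensional rank check on a Jacobian) already suffices, and the measure-zero exceptional set in the theorem statement is exactly the complement of this Zariski-open locus. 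The $r=1$ case provides a consistency check, reproducing the spirit of the five-basis pure-state tomography of \cite{Goyeneche2015} invoked in the surrounding discussion.
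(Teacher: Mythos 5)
Your reduction of DDB data to revealed matrix entries is faithful to the paper, but after that point you diverge in a way that leaves a genuine gap: you select the partitions \emph{uniformly at random} and try to run entry-wise matrix completion, whereas the paper's proof is entirely deterministic. The paper targets the specific band \(C=\{(j,k):|j-k|\le r\}\), invoking the result of Baldwin et al.\ \cite{Baldwin2016} that measuring exactly these band elements is rank-\(r\) strictly complete, with an \emph{explicitly characterized} measure-zero failure set (rank-\(r\) states whose consecutive \(r\times r\) principal submatrices \(A_k\) are simultaneously singular). The technical heart of the paper is then a covering lemma you do not have: in the recursive DDB construction, all pairs with \(|j-k|\le r\) are concentrated in at most \(O(r\log\frac{d}{r})\) of the \(d-1\) (or \(d\)) partitions. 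The count exploits the iteration structure — at the stage where the dimension is \(\approx 2^m\) with \(2^{m-1}<r\le 2^m\) there are at most \(2^m-1\) relevant partitions, and each subsequent doubling step contributes only about \(r\) new crossed partitions containing band pairs, giving the bound \(2^m-1+r(\log d-m)<r\log(4d/r)\) (Lemmas~\ref{lemma1_proof} and \ref{lemma2_proof}). Note that this concentration is a special property of the construction; under your uniform random choice of \(O(r\log\frac{d}{r})\) partitions you would miss almost all band pairs, since each pair lies in exactly one partition.

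Both of your proposed repair routes fall short of the theorem as stated. Route (i), a golfing/dual-certificate argument for partition-correlated sampling, would at best yield recovery \emph{with high probability over the random partition choice} and only for \emph{incoherent} states; the set of rank-\(r\) states violating a fixed incoherence bound (e.g.\ states nearly aligned with computational basis vectors) is open and of positive measure, so the exceptional set cannot be made measure-zero this way — the conclusion would be qualitatively weaker than and different from the theorem. Route (ii), Zariski-openness plus a Jacobian rank check at one generic point, is a legitimate strategy in principle, but you leave the decisive step — exhibiting injectivity of the measurement map for this correlated mask — as an unverified ``finite-dimensional rank check''; moreover, generic injectivity on the rank-\(r\) variety only gives uniqueness among rank-\(\le r\) states, while strict completeness in \cite{Baldwin2016} guarantees that \emph{no physical state of any rank} matches the data, which is what licenses the convex-program recovery in Eq.~(\ref{eq:optimization_problem}) and its noise robustness. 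To close the gap you should replace random selection with the deterministic identification of the partitions meeting the band \(|j-k|\le r\) and prove the iterative counting lemma above.
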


\textbf{Analysis.}  
Reconstructing a rank-\(r\) DM can reduce to measuring specific elements selected based on the following label set: 
\begin{equation}
    C = \{(j,k) : |j-k| \le r\}.
\end{equation}
Fig.~\ref{reconstruction} illustrates the elements involved for \(r=2\). This set includes all elements along the main diagonal and the adjacent diagonals up to the \(r\)-th order, capturing the critical information for reconstruction.
\begin{figure}[!htb]
	\begin{center}
	\includegraphics[width=0.28\textwidth]{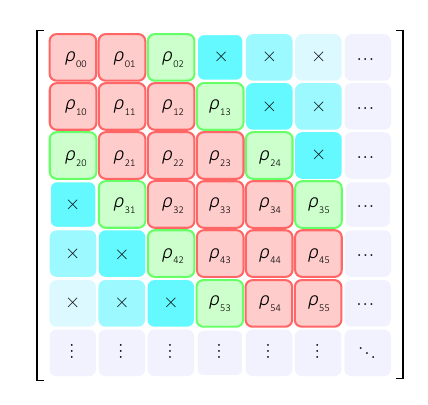}
		\caption{Diagonal DM elements to be measured. For \(r=2\), elements in the red region (\(|j-k| \le r-1\)) form the principal submatrix \(A_k\), while the light green region represents elements with \(|j-k| = r\).}
		\label{reconstruction}
	\end{center}
\end{figure}

The complete DM can be recovered using convex optimization, formulated as:
\begin{equation}
\begin{aligned}
    \hat{X} = &\arg \min_{X} \left\| \mathrm{tr}(X |k\rangle\langle j|) - \rho_{jk} \right\| \quad \text{for } (j,k) \in C, \\
    &\text{such that } X \succeq 0, \mathrm{tr}(X) = 1.
\end{aligned}
\label{eq:optimization_problem}
\end{equation}
Previous research has proved that a POVM capable of determining the elements in \(C\) is classified as rank-\(r\) strictly complete \cite{Baldwin2016}. These measurements are highly advantageous as they ensure efficient recovery through convex optimization and are robust against noise and state preparation errors.

Thus we should pick out DDBs containing the following states:
\begin{equation}\label{eq:partial_elements}
    \mathcal{B} = \{|l\rangle, |\phi_{jk}^{+}\rangle, |\psi_{jk}^{+}\rangle : 0 \le j < k \le d-1, |j-k| \le r; l \in [d]\}.
\end{equation}
Equivalently, the task reduces to finding the partitions that contain pairs \((j,k)\) where \(|j-k| \le r\) and $0\le j<k \le d-1$. The PMs onto these DDBs can determine the target DM elements.

For \(d = 2^n\) or general \(d\), the required elements can be covered by \(O(r \log \frac{d}{r})\) DDB partitions. As detailed in Appendix C, this iterative approach ensures scalability and efficiency, making it suitable for high-dimensional systems.  \qed

An alternative approach to cover the states in Eq.~(\ref{eq:partial_elements}) employs \(4r+1\) eigenbases, specifically applicable for dimensions \(d = 2^n\) \cite{Baldwin2016}. While Result~\ref{result2} extends to arbitrary \(d\), it requires a larger number of eigenbases, reflecting a trade-off between generality and simplicity.

Compressed sensing, based on randomly sampled Pauli observables, estimates unknown states with high probability using \(O(rd \log^2 d)\) expectation values \cite{Gross2010}. Although effective, this method requires \(O(rd \log^2 d)\) separable unitary operations, each followed by PM on the computational basis. By contrast, DDBs achieve an exponential reduction in the types of unitary operations required, needing only \(O(r \log d)\) entangled unitary operations. The frequent changes of unitary operation in the measurement setup could introduce more noise and unrelated errors, potentially reducing the overall accuracy. However, DDB operations necessitate recording all \(d\) measurement outcomes for each computational basis measurement. When applying Eq.~(\ref{eq:optimization_problem}) with DDBs to reconstruct rank-\(r\) DMs, the required post-processing data volume scales as \(O(rd)\), slightly reduced compared to \(O(rd\log^2 d)\) in compressed sensing, but still dependent on \(d\). 

DDB measurements are \(d\)-outcome measurements, whereas \(n\)-qubit Pauli measurements, which record only expectation values, can be treated as 2-outcome measurements. While the sampling complexity of compressed sensing with Pauli measurements has been rigorously analyzed \cite{flammia2012quantum}, deriving the specific sampling complexity for \(d\)-outcome DDB measurements under a given estimation fidelity remains an interesting direction for future work.

The failed set of determinations in matrix completion can be characterized by \(r \times r\) principal submatrices:
\begin{equation}        
	A_k=\left(                  
	\begin{array}{ccc}   
		\rho_{k,k} & \cdots & \rho_{k,k+r-1}\\
		\vdots & \ddots & \vdots \\
		\rho_{k+r-1,k} &  \cdots  & \rho_{k+r-1,k+r-1} 
	\end{array}
	\right),                
\end{equation}
where \(k=0, \dots, d-r\). Failures occur if \(A_i\) is singular for \(i = 0, \ldots, d - r - 1\) and simultaneously \(A_j\) is singular for \(j = i \pm 1\) \cite{Baldwin2016}.

An adaptive strategy can handle certain failure cases, such as when some diagonal elements are zero. For example, if \(\rho_{00} = 0\), all \(|\psi_k\rangle\) must have zero components in the first basis state, leading to \(\rho_{0l} = \rho_{l0} = 0\) for all \(l = 0, \dots, d-1\). In such cases, the submatrix can be reconstructed by erasing the first row and column of \(\rho\). The DDBs can then be redesigned for the reduced subspace spanned by \(\{|1\rangle, \dots, |d-1\rangle\}\), ensuring robust reconstruction in the presence of such failures.

To test the results, we process a numerical simulation, which explores the fidelity of reconstructed quantum density matrices under varying numbers of measurements and ranks. 
The simulation begins with random quantum density matrices. For each matrix, multiple reconstructions are performed using different numbers of measurements via DDBs. The fidelity of the reconstructed matrices is then calculated, with the process repeated 20 times to obtain average fidelity values.
Figure \ref{reconstruction2}(a) shows how the fidelity changes with the number of measurements for a fixed dimension ($d=16$) while varying the rank of the density matrix ($r=2,4,8,16$), demonstrating that increasing the number of measurements generally improves the fidelity of the reconstruction. In addition, we conducted a test on reconstructing density matrices using either DDBs or a Pauli-based compressed sensing method. The compressed sensing involved an initial measurement with varying numbers of random Pauli bases (CSPs), specifically $rn$, $10rn$, $drn^2$, and $10drn^2$, followed by convex optimization to minimize the nuclear norm of the density matrix. In contrast, Result \ref{result2} employed $\mathcal{O}(rn)$ DDBs with semidefinite optimization to minimize the Frobenius distance. As shown in Figure \ref{reconstruction2}(b)-(d), corresponding to $r=2$ and $d=4, 8, 16$, our methods demonstrate clear advantages in requiring fewer measurements.

 \begin{figure}[!htb]
	\begin{center}
	\includegraphics[width=0.46\textwidth]{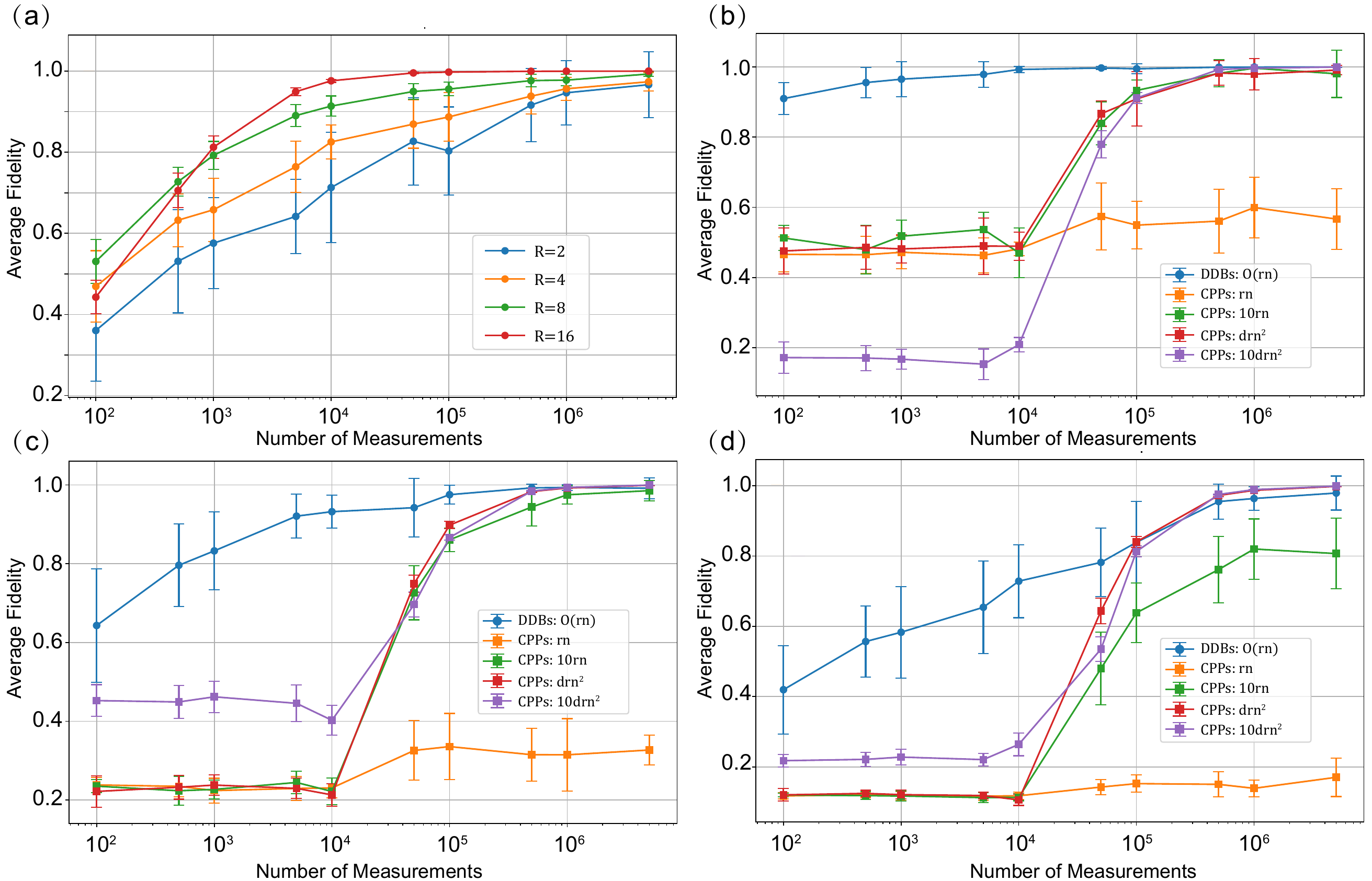}
		\caption{Simulation results of Result \ref{result2}. (a) illustrates the fidelity's dependence on the number of measurements for $d=16$ while varying the rank of the density matrix. (b)-(d) compare the fidelity variations across different numbers of measurements between the methods using DDBs(Result \ref{result2}) and Pauli-based compressed sensing method, with $d=4, 8, 16$.} 
		\label{reconstruction2}
	\end{center}
\end{figure}

\emph{Implementation}---The circuit implementation of DDBs in $n$-qubit systems is detailed below, highlighting the iterative construction approach and its practical scalability.

\begin{theorem}
In \(n\)-qubit systems, the projective measurements onto the \(2 \times 2^n - 1\) DDBs can be implemented using permutational operations followed by Pauli measurements. Each permutation can be decomposed into \(O(n^4)\) elementary gates.

The PM onto computational basis \(\{|0\rangle, \cdots, |2^n-1\rangle\}\) corresponds to the Pauli observable \(Z_1 Z_2 \cdots Z_n\) (with \(\otimes\) omitted for convenience). 

The PMs onto the  remaining \(2 \times 2^n - 2\) DDBs are performed  using a permutational operation followed by one of \(2n\) Pauli observables:
\begin{equation}\label{equ:2npauli}
    Z_1 \cdots Z_{j-1} X_j Z_{j+1} \cdots Z_n ~ \text{or} ~ Z_1 \cdots Z_{j-1} Y_j Z_{j+1} \cdots Z_n, 
\end{equation}
where \(j = 1, \cdots, n\).

The corresponding permutation operation is:
\begin{equation}\label{equ:permutation}
    P_{j,k} = I^{\otimes (j-1)} \otimes \left( |0\rangle\langle 0| \otimes I^{\otimes (n-j)} + |1\rangle\langle 1| \otimes (\mathcal{U}_{n-j})^k \right),
\end{equation}
with \(k = 0, \cdots, 2^{n-j} - 1\), and 
\begin{equation}
    \mathcal{U}_{n-j} = \sum_{m=0}^{2^{n-j}-1} |m - 1 \bmod 2^{n-j}\rangle \langle m|.
\end{equation}

\label{result3}
\end{theorem}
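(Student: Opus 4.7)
The plan is to prove Result \ref{result3} by induction on $n$, explicitly identifying each DDB as the image under some $P_{j,k}$ of a Pauli eigenbasis of the form Eq.~(\ref{equ:2npauli}). A parameter count confirms the bijection: pairs $(j,k)$ with $1\le j\le n$ and $0\le k\le 2^{n-j}-1$ number $\sum_{j=1}^{n} 2^{n-j}=2^n-1$, matching the partitions of Eqs.~(\ref{eq1})-(\ref{eq2}); doubled for the $\phi$/$\psi$ choice (i.e.\ $X_j$ vs.\ $Y_j$) and augmented by the computational basis measured by $Z_1 Z_2\cdots Z_n$, this accounts for all $2\cdot 2^n-1$ DDBs. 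The base case $n=1$ reduces to the three Pauli observables, with $P_{1,0}=I$.

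The inductive step splits along the two families in Eqs.~(\ref{eq1})-(\ref{eq2}). For a \emph{crossed} partition $T^{2^n}_{m_2}$, each pair $(k',\,2^{n-1}+(k'+m_2)\bmod 2^{n-1})$ differs in the leading bit. A direct calculation shows that with $k^\ast=(-m_2)\bmod 2^{n-1}$, applying $P_{1,k^\ast}$ to the Pauli eigenstate $(\vert 0\rangle\pm\vert 1\rangle)\otimes\vert b\rangle/\sqrt{2}$ of $X_1 Z_2\cdots Z_n$ yields precisely the desired $\phi$-type DDB element, because $P_{1,k^\ast}$ acts trivially on the $\vert 0\rangle$-sector of the leading qubit and cyclically shifts the $\vert 1\rangle$-sector by $-k^\ast\equiv m_2\pmod{2^{n-1}}$; the $\psi$-type partner comes from starting with $Y_1 Z_2\cdots Z_n$. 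For a \emph{merged} partition $T^{2^n}_{m_1}$, both elements of each pair share the leading bit, so the $n$-qubit DDB factorises as the leading-qubit computational basis tensored with the $(n-1)$-qubit DDB for $T^{2^{n-1}}_{m_1}$. The induction hypothesis implements the smaller DDB via some $P^{(n-1)}_{j',k'}$ and Pauli $Z\cdots X_{j'}\cdots Z$; the identities $P^{(n)}_{j'+1,k'}=I\otimes P^{(n-1)}_{j',k'}$ and $Z_1\otimes(Z\cdots X_{j'}\cdots Z)=Z_1\cdots X_{j'+1}\cdots Z_n$ then immediately close the induction.

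For the gate complexity, each $P_{j,k}$ is a one-qubit-controlled cyclic shift $\mathcal{U}_{n-j}^k$, equivalently a controlled modular subtraction of the constant $k$ on $n-j$ qubits. A standard ancilla-free adder/subtractor realises this with $O((n-j)^2)$ multi-controlled Toffolis; each such Toffoli compiles into $O(n)$ elementary one- and two-qubit gates via Barenco-type decompositions; and promoting to the outer-controlled version contributes a further $O(n)$ factor under standard controlization. Multiplying these contributions bounds each $P_{j,k}$ by $O(n^4)$ elementary gates.

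The hard part is the combinatorial bookkeeping in the crossed case, specifically verifying that the cycle offset $k^\ast$ inside $P_{1,k^\ast}$ exactly reproduces the partition label $m_2$ of Eq.~(\ref{eq2}) across all choices of $k'$ simultaneously. Once this identification is pinned down, the merged case follows essentially mechanically from the tensor decomposition, and the gate count is a routine compilation estimate for controlled modular arithmetic.
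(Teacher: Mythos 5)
Your construction is essentially the paper's own proof (Appendix D): the induction on $n$ splits exactly as the paper's iterative construction, with merged partitions handled by $U_t^{2^n}=I\otimes U_t^{2^{n-1}}$ and crossed partitions by a controlled cyclic shift composed with $H$ (or $\tilde{H}$) on the leading qubit, followed by the Barenco decomposition of the resulting Toffoli cascades. Your offset identification $k^\ast=(-m_2)\bmod 2^{n-1}$ agrees with the paper's choice $\mathcal{V}_{n-1}^{\,j}=(\mathcal{U}_{n-1}^{\dagger})^{j}$ with $j=m_2-2^{n-1}$, since $(\mathcal{U}_{n-1})^{k^\ast}=\mathcal{V}_{n-1}^{\,j}$, so the combinatorial bookkeeping you flag as the hard part does close.

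One caveat on your gate count: the final bound $O(n^4)$ is correct, but two of your intermediate factors are individually wrong and happen to cancel. Ancilla-free Barenco (Corollary 7.6 of \cite{barenco1995elementary}, the result the paper invokes) decomposes an $m$-controlled Toffoli $\wedge_m(X)$ into $\Theta(m^2)$ elementary gates, not $O(n)$; the linear-per-Toffoli figure requires a work qubit (Corollary 7.4), which the ``ancilla-free'' route you announce forgoes. Likewise, promoting to the outer-controlled version does not cost a multiplicative $O(n)$ factor: adding the control merely extends each $\wedge_m(X)$ in the cascade to $\wedge_{m+1}(X)$, leaving the per-gate order unchanged. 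The correct ancilla-free tally, which is the paper's, is: binary expansion of $k$ gives $O(n)$ shift factors $(\mathcal{U}_{n-j})^{2^l}=\mathcal{U}_{n-j-l}\otimes I^{\otimes l}$, each a cascade of $O(n)$ generalized Toffolis (so $O(n^2)$ Toffolis in all, matching your adder count), each costing $\Theta(n^2)$ gates, for $O(n^4)$ total. As written, your $O(n)$-per-Toffoli claim and your $O(n)$ controlization overhead compensate to reach the same answer; repair them as above, and note that with ancillas the bound improves to $O(n^3)$, as the paper remarks.
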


Each of the \(2n\) Pauli observables in Eq. (\ref{equ:2npauli}) can be implemented by applying a Hadamard gate \(H\) (or \(\tilde{H}^{\dag} = S^{\dag}H\)) to qubit \(j\), followed by a computational basis measurement \(Z_1 Z_2 \cdots Z_n\). Thus, the PM onto each DDB is effectively transformed into a unitary operation followed by a PM on the computational basis.

For \(j = 1, \cdots, n\), there are \(2^{j-1}\) types of permutational operations \(P_{j,k}\) as defined in Eq. (\ref{equ:permutation}). Therefore, the total number of distinct permutational operations is \(2^n - 1\). Including their dual forms and the computational basis, these projected bases correspond one-to-one to the \(2 \times 2^n - 1\) DDBs.

Each \(P_{j,k}\) can be regarded as an \((n-j+1)\)-qubit operation with one control qubit. The operation \(\mathcal{U}^k_{n-j}\) is defined as applying \(\mathcal{U}_{n-j}\) over \(k\) times, where \(k = 1, \cdots, 2^{n-j} - 1\). Even when \(k\) is exponentially large, \(\mathcal{U}^k_{n-j}\) can be efficiently implemented as some linear combinations of \(\mathcal{U}^{2^0}_{n-j}, \mathcal{U}^{2^1}_{n-j}, \cdots, \mathcal{U}^{2^{n-j-1}}_{n-j}\). 
Interestingly, the circuit decomposition of \(\mathcal{U}^{2^l}_{n-j}\) becomes more efficient than \(\mathcal{U}^{2^m}_{n-j}\) for \(l \geq m\). The classical counterpart of \(\mathcal{U}_{n-j}\) is a basic shift operation. Its quantum version, \(\mathcal{U}_{n-j}\), corresponds to the shift operator in the Weyl-Heisenberg group.

By leveraging Corollary 7.6 in \cite{barenco1995elementary}, each generalized \(n\)-qubit Toffoli gate can be decomposed into \(O(n^2)\) elementary quantum gates. The operation \(\mathcal{U}_{n-j}\) is constructed as a sequence of controlled operations, specifically generalized Toffoli gates with varying control sizes. 
Consequently, \(\mathcal{U}_{n-j}\) can be realized using \(O((n-j)^3)\) elementary gates. More generally, any permutation operation \(P_{j,k}\) can be decomposed into at most \(O(n^4)\) elementary gates. Thus, the PM onto each DDB can be implemented by applying a permutational operation decomposed into \(O(n^4)\) elementary gates, followed by a computational basis measurement.

The detailed proof is provided in Appendix D \cite{supp}.

\section{Conclusion and discussion} 
 
\begin{table*}[t]
\centering
 \begin{tabular}{c c c c c c} 
 \hline
 \hline
$n$-qubit IC measurements & Number & Directness  & Generalized dimensions $d$ & Rank-$r$ QST & Circuits \\
\hline
Pauli observables&  $4^n$& No & Yes & $O(r n^2\cdot 2^n)$ \cite{Gross2010}& Local  \\
MUBs&  $2^n+1$ & No & Open question \cite{horodecki2022five} &  --& \cite{seyfarth2011construction,seyfarth2015practical,yu2023effi} \\
DDBs & $2^{n+1}-1$ & Yes &Result~\ref{result1}, yes  & Result~\ref{result2}, $O(rn)$  & Result~\ref{result3}  \\ 
\hline
\hline
\end{tabular}
\caption{Comparison of \(n\)-qubit IC measurements including Pauli observables, MUBs, and DDBs for QST. When \(n=1\), all measurement strategies yield identical results. DDBs stand out by enabling the direct reconstruction of DM elements using a constant number of eigenbases (up to three). In contrast, MUBs, while theoretically minimal and optimal, face the unresolved question of their existence in all dimensions \(d\), rendering them insufficient for complete QST. Pauli observables, while simple in terms of circuit implementation, require the largest number of operations.
}
\label{table1}
\end{table*}
 
Efficient characterization of quantum states remains a central challenge in quantum science. In this work, we present a method for constructing and decomposing minimal DDBs, offering a scalable solution for QST across arbitrary dimensions \(d\). Compared to traditional approaches such as Pauli observables and the theoretically optimal \(d+1\) MUBs (Table \ref{table1}), our method employs direct measurement protocols capable of reconstructing each DM element using a constant number of unitary operations. By leveraging strong measurements, our scheme ensures measurement accuracy, eliminates the need for ancillary pointers, and reduces the number of required projectors to the minimal \(2d^2\). 
Notably, we demonstrate that \(O(d)\) unitary operations and computational basis measurements suffice for reconstructing all DM elements, achieving an exponential reduction in unitary operation requirements for low-rank DMs while also slightly decreasing the data required for post-processing than random Pauli measurements. The practical feasibility of our approach is validated through extensive numerical simulations and cloud-based quantum experiments, highlighting its potential for real-world applications.

Building on this work, several promising research directions emerge. One exciting avenue is the application of random DDBs for classical shadow tomography, which could enable constant-time post-processing \cite{wang2024quantum} for any observable in a single experiment—dramatically improving upon the worst exponential complexity of random Clifford measurements \cite{huang2020predicting}. This advancement holds promise for more fidelity estimation and entanglement detection tasks. 
Another important challenge is optimizing the \(O(n^4)\) complexity of permutation operations in DDBs, which could expand their utility in high-dimensional quantum systems. Extending these methods to \(d\)-level systems further raises opportunities to identify efficient physical implementations. Additionally, integrating DDBs with matrix recovery techniques in QST presents exciting possibilities: with prior rank-$r$ information, our method achieves exponential reductions in unitary operations compared to the full dimension $d$.  It remains an interesting question whether other forms of prior knowledge could yield similar gains. 
Finally, comparative studies of Pauli observables, MUBs, and DDBs under experimental noise and across different platforms could provide crucial insights for their practical adoption. These directions underscore the theoretical and practical potential of DDBs in advancing quantum information science.

\textbf{Acknowledgments---}
We thank Cheng Qian for the brute-force searches that helped clarify the structure of the DDBs construction. This work was supported by the National Natural Science Foundation of China under Grants No. 62001260, 42330707 (Y.W.), 11701536 (Y.L.), and 11905111 (K.L.); the Beijing Natural Science Foundation under Grant No. Z220002 (Y.W.); and the Major Key Project of PCL (Y.L.).

\textbf{Author Contributions---}  
Y. W. conceived the idea for this paper, applied the rank-\( r \) QST, decomposed the circuits, and drafted the manuscript. H. J. constructed partitions for \( d = 2^n \), while Y. L. handled cases where \( d/2 \) is odd. K. L. conducted the error analysis, numerical simulations, and cloud experiments, and provided revisions to the manuscript. All authors reviewed and approved the final version of the manuscript for submission.

\bibliographystyle{apsrev4-1}
\bibliography{sample.bib}

\clearpage
\onecolumngrid
\appendix 
\section{Detailed proofs in Result 1}

In the main text, we demonstrated the construction of \(d-1\) partitions for \(d=2^n\), ensuring that each pair \((j, k)\) with \(0 \leq j < k \leq 2^n - 1\) is included in one partition. Here, we extend this construction to arbitrary even dimensions \(d\).

\begin{lemma}\label{lemma1}
For even \(d\), we can construct \(d-1\) partitions \(\{T^d_1, \cdots, T^d_{d-1}\}\), ensuring that every pair \((j, k)\) with \(0 \leq j < k \leq d - 1\) is is in one of the partitions.
\end{lemma}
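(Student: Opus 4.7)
The plan is to recognize this statement as a $1$-factorization of the complete graph $K_d$, and to give an explicit round-robin construction that works for every even $d$. Each partition $T^d_m$ must be a perfect matching on $\{0,1,\dots,d-1\}$ (i.e., $d/2$ disjoint pairs using every index exactly once), and the $d-1$ partitions together must cover all $\binom{d}{2}$ unordered pairs. This is possible iff $d$ is even; existence follows from the classical circle method, which I would specialize to the paper's notation.

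Concretely, I would fix the vertex $d-1$ as the ``pivot'' and arrange the remaining vertices $0,1,\dots,d-2$ on a regular $(d-1)$-gon. For each $m\in\{1,\dots,d-1\}$ I would set
\begin{equation*}
T^d_m \;=\; \{(d-1,\,m-1)\} \cup \bigl\{\bigl((m-1+i)\bmod(d-1),\ (m-1-i)\bmod(d-1)\bigr) : 1\le i \le d/2-1\bigr\}.
\end{equation*}
Then I would check two properties. First, $T^d_m$ is a perfect matching: the $d-2$ residues $(m-1\pm i)\bmod(d-1)$ for $i=1,\dots,d/2-1$ are pairwise distinct and all different from $m-1$, so together with the pivot pair $(d-1,m-1)$ every index in $\{0,\dots,d-1\}$ appears exactly once. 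Second, every pair occurs in exactly one partition: pairs of the form $(j,d-1)$ sit in $T^d_{j+1}$, while a pair $(j,k)$ with $j,k<d-1$ lies in $T^d_m$ exactly when $j+k\equiv 2(m-1)\pmod{d-1}$. Since $d$ is even, $d-1$ is odd, so multiplication by $2$ is invertible modulo $d-1$ and $m$ is uniquely determined. A simple counting check ($(d-1)\cdot d/2 = \binom{d}{2}$) confirms that no pair is missed.

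I expect the main subtlety to be reconciling this round-robin construction with the iterative ``merging plus crossed'' framework already used for $d=2^n$ in the main text, especially when $d/2$ is odd so the doubling recursion cannot simply be iterated down to the base case. One natural route is a two-case argument: if $d=2m$ with $m$ even, recurse via the construction in Eqs.~(\ref{eq1})--(\ref{eq2}); if $m$ is odd, either apply the round-robin construction directly or, equivalently, pair the computational basis into a single crossed partition of size $d/2$ and reduce the rest to a smaller 1-factorization problem. For the lemma as stated, however, the lemma asserts only existence and cardinality, so the explicit round-robin formula above suffices and requires only elementary modular arithmetic to verify.
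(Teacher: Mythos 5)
Your proof is correct, but it takes a genuinely different route from the paper's. The paper proves the lemma by induction on even \(d\): it halves the dimension, forming merged partitions \(T_t^d = T_t^{d/2}\cup(T_t^{d/2}+d/2)\) plus crossed partitions when \(d/2\) is even, and detouring through the even dimension \(d/2+1\) with the modified partitions of Eq.~(\ref{merge2}) when \(d/2\) is odd. You instead recognize the statement as a \(1\)-factorization of the complete graph \(K_d\) and write down the classical round-robin (circle-method) factorization in closed form. Your verification is sound: the residues \((m-1\pm i)\bmod(d-1)\) are pairwise distinct because \(i+i'\le d-2\), the congruence \(j+k\equiv 2(m-1)\pmod{d-1}\) pins down \(m\) uniquely since \(2\) is invertible modulo the odd number \(d-1\), and the one glossed step — that the unique \(m\) actually admits an index \(i\) with \(1\le i\le d/2-1\), after swapping the roles of \(j\) and \(k\) if necessary — is closed off by your counting check \((d-1)\cdot d/2=\binom{d}{2}\). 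Your route is shorter, uniform in \(d\), and avoids the paper's two-case recursion and its somewhat delicate modified-partition step. What the paper's recursive construction buys, however, is everything downstream of the lemma: it specializes to the iterative \(d=2^n\) construction of Eqs.~(\ref{eq1})--(\ref{eq2}), the proof of Result~\ref{result2} (Lemmas~\ref{lemma1_proof} and~\ref{lemma2_proof}) bounds the number of partitions meeting the band \(|j-k|\le r\) by tracking how few new such partitions each doubling iteration creates, and the \(O(n^4)\) circuit decompositions of Result~\ref{result3} exploit the merge/cross structure through conditional shift operators. Under your round-robin factorization these benefits are lost: for fixed small difference \(s\), the pair \((j,j+s)\) lies in the partition with \(m-1\equiv 2^{-1}(2j+s)\pmod{d-1}\), which sweeps through \(d-1-s\) distinct residues as \(j\) varies, so the near-diagonal pairs are spread over \(\Theta(d)\) partitions rather than \(O(r\log(d/r))\), and no comparably structured circuit family is apparent. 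You anticipate exactly this tension in your closing paragraph, and your conclusion is right: for the lemma as stated — existence and cardinality only — your explicit formula fully suffices, but it could not substitute for the paper's construction in the later results.
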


\begin{proof}
We use mathematical induction on \(d\), where \(d\) is even.

\textbf{Base Case (\(d=2\)):}  
The single pair \((0, 1)\) corresponds to one partition:  
\[
\mathbb{T}^2 = \{T_1^2\} = \{(0, 1)\}.
\]  
Thus, the lemma holds for \(d=2\).

\textbf{Inductive Step:}  Since \(d\) is even, one of the numbers \(d/2\) or \(d/2+1\) is also even. 
Assume the lemma holds for \(d/2\) or \(d/2 + 1\). We construct \(d-1\) partitions for even \(d\).

\textbf{Case 1: \(d/2\) is even.}  
By the inductive hypothesis, there exist \(d/2 - 1\) partitions \(\{T_t^{d/2} : t = 1, 2, \cdots, d/2 - 1\}\) that cover all pairs in \(0 \leq j < k \leq d/2 - 1\). Using these, we construct \(d-1\) partitions for \(d\) as follows:

1. \textit{Merged Partitions:}
\begin{equation}\label{merge1}
T_t^d = T_t^{d/2} \cup (T_t^{d/2} + d/2), \quad t = 1, 2, \cdots, d/2 - 1.
\end{equation}

2. \textit{Crossed Partitions:}
\begin{equation}\label{insect1}
T_t^d = \{(j, d/2 + [(t + j) \bmod d/2]) : 0 \leq j \leq d/2 - 1\}, \quad t = d/2, \cdots, d-1.
\end{equation}

\textbf{Case 2: \(d/2\) is odd.}  
By the inductive hypothesis, there exist \(d/2\) partitions \(\{T_t^{d/2+1} : t = 1, 2, \cdots, d/2\}\) for even dimension $d/2+1$. Using these, we construct \(d-1\) partitions for \(d\) as follows:

1. \textit{Modified Partitions:}
From these $d/2$ partitions, we select neighbors of \(d/2\) as \(\{c_t : t = 1, \cdots, d/2\}\) and define:
\begin{equation}\label{merge2}
T_t^d = T_t^{d/2+1} \cup (T_t^{d/2+1} + d/2) - \{(c_t, d/2), (d/2 + c_t, d)\} \cup \{(c_t, d/2 + c_t)\}.
\end{equation}

2. \textit{Crossed Partitions:}
\begin{equation}\label{insect2}
T_t^d = \{(j, d/2 + [(t + j) \bmod d/2]) : 0 \leq j \leq d/2 - 1\}, \quad t = d/2 + 1, \cdots, d-1.
\end{equation}

\textbf{Example for \(d=6\):}  
Fig.~\ref{partitions6} illustrates the construction for \(d=6\), where neighbors of \(3 = d/2\) are \(2, 1, 0\). For instance:
\[
T_1^6 = \{(0, 1), (2, 5), (3, 4)\}.
\]  
This is constructed as:
\[
T_1^6 = (0, 1) \cup (2, 3) \cup (3, 4) \cup (5, 6) - \{(2, 3), (5, 6)\} \cup \{(2, 5)\}.
\]

\begin{figure}[htb]
\centering
\includegraphics[width=0.6\textwidth]{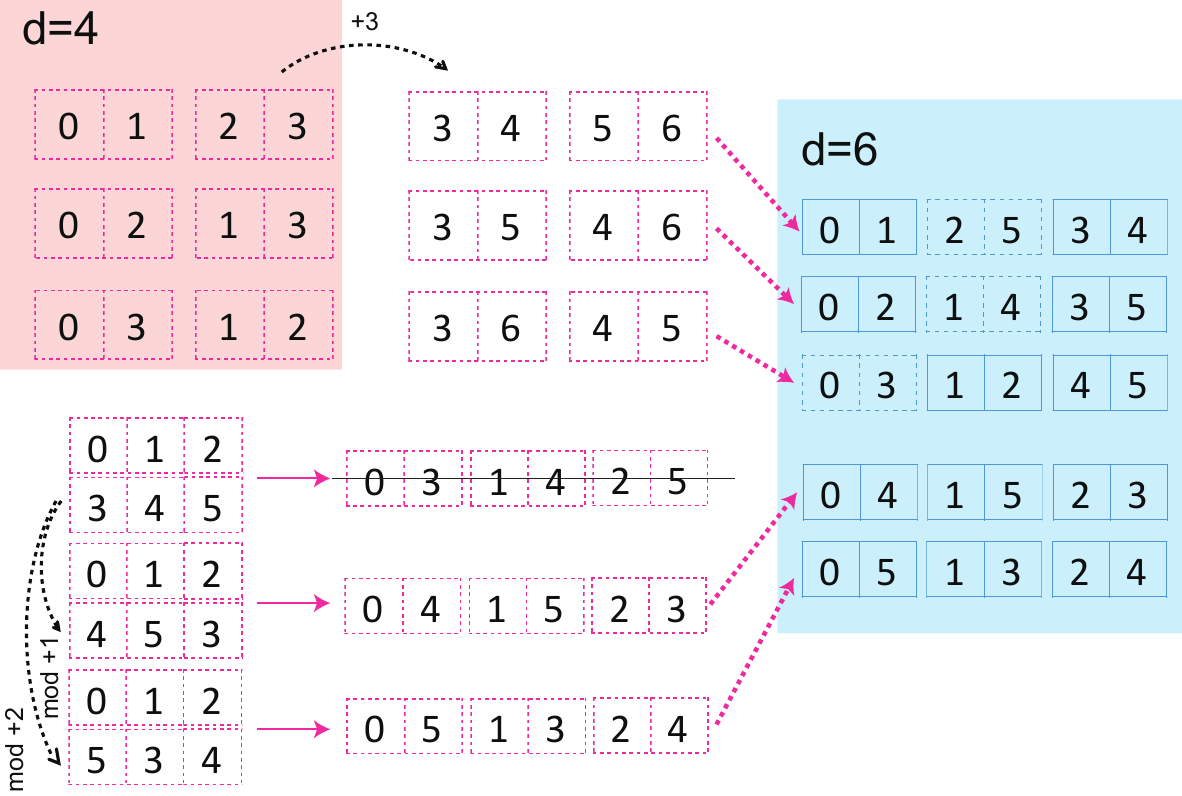}
\caption{Construction of five partitions for \(d=6\). 
%Here, the neighbors of \(3 = d/2\) are \(2, 1, 0\). The first partition, \(\{(0, 1), (2, 5), (3, 4)\}\), is obtained by combining and modifying existing pairs.
}
\label{partitions6}
\end{figure}

\textbf{Conclusion:}  
We verify that all pairs \((j, k)\) with \(0 \leq j < k \leq d-1\) are included in one of the partitions:

1. \textit{Both \(j\) and \(k\) in the same half (\(0 \leq j, k < d/2\) or \(d/2 \leq j, k < d\)):}  
   These pairs are covered by the merged partitions \(\{T_t^{d/2} \cup (T_t^{d/2} + d/2)\}\).

2. \textit{\(j\) and \(k\) in different halves (\(0 \leq j < d/2, d/2 \leq k < d\)):}  
    For even \(d/2\), pairs are covered by crossed partitions.
    For odd \(d/2\), pairs are covered by crossed or modified partitions.

Thus, the lemma holds for all even \(d\). \(\qed\)
\end{proof}

\begin{lemma}\label{lemma2}
    For any odd dimension \(d\), there exist \(d\) partitions \(\{T_1^d, T_2^d, \ldots, T_d^d\}\) such that any tuple \((j, k)\) with \(0 \leq j < k \leq d - 1\) is included in exactly one of these partitions.
\end{lemma}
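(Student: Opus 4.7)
The plan is to reduce the odd case to the even case already established in Lemma~\ref{lemma1} by embedding \(d\) elements into a \(d+1\) universe and then deleting the extra element. Since \(d\) is odd, \(d+1\) is even, so Lemma~\ref{lemma1} yields \(d\) partitions \(\{T_1^{d+1}, T_2^{d+1}, \ldots, T_d^{d+1}\}\) of the pair set \(\{(j,k) : 0 \le j < k \le d\}\). In each such partition, the \((d+1)/2\) pairs form a perfect matching on \(\{0,1,\ldots,d\}\), so the distinguished element \(d\) appears in exactly one pair, say \((c_t, d)\) with \(c_t \in \{0,1,\ldots,d-1\}\).

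The construction is then to define
\begin{equation}
T_t^d := T_t^{d+1} \setminus \{(c_t, d)\}, \qquad t = 1, 2, \ldots, d.
\end{equation}
Each \(T_t^d\) consists of \((d-1)/2\) pairs drawn entirely from \(\{0,1,\ldots,d-1\}\), and the element \(c_t\) is the one left unmatched within band \(t\), consistent with the odd-\(d\) band structure described in the main text.

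Next I would verify the covering property. Any pair \((j,k)\) with \(0 \le j < k \le d-1\) does not contain the element \(d\); by Lemma~\ref{lemma1} this pair appears in exactly one of the original partitions \(T_t^{d+1}\), and since we only removed pairs of the form \((c_t,d)\), the pair \((j,k)\) survives and appears in exactly one \(T_t^d\). Conversely, every pair in every \(T_t^d\) is inherited from \(T_t^{d+1}\) and lies in \(\{(j,k) : 0 \le j < k \le d-1\}\). This establishes the claim with exactly \(d\) partitions, which matches the lower bound \(\binom{d}{2}/((d-1)/2) = d\) for odd \(d\).

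The main obstacle is essentially bookkeeping rather than conceptual: one must confirm that the partitions produced by Lemma~\ref{lemma1} are genuinely perfect matchings on \(\{0,1,\ldots,d\}\) (so that the element \(d\) appears in exactly one pair per band, making the deletion step well defined), and then check that the leftover elements \(c_t\) are consistent with the odd-\(d\) interpretation of a band containing \((d-1)/2\) pairs plus one unmatched index. Both checks follow directly from the construction of merged and crossed partitions in Eqs.~(\ref{merge1})--(\ref{insect2}), so no further combinatorial input is needed. \(\qed\)
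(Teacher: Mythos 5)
Your proposal is correct and follows essentially the same route as the paper's proof: both pass to the even dimension \(d+1\) via Lemma~\ref{lemma1}, use the fact that each of the resulting \(d\) partitions is a perfect matching so that \(d\) lies in a unique pair \((c_t,d)\), delete that pair, and conclude by counting that every pair \((j,k)\) with \(0\le j<k\le d-1\) survives in exactly one partition. The only cosmetic difference is that the paper keeps the freed endpoint \(c_t\) as a singleton and notes that \(\{c_t\}_{t=1}^{d}=\{0,\ldots,d-1\}\) (which is why the computational basis is unnecessary for odd \(d\) in the DDB construction), whereas you discard it—immaterial for the lemma as stated.
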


\begin{proof}
We prove this by leveraging the result of Lemma \ref{lemma1}, which provides \(d+1\) partitions for the even dimension \(d+1\).

First, for the even dimension \(d+1\), each partition \(\{T_t^{d+1} : t = 1, \ldots, d\}\) includes one pair of the form \((C_t, d)\), where \(C_t\) is a neighboring number of \(d\). These pairs ensure that all tuples in the range \(0 \leq j < k \leq d\) are covered exactly once, while no duplicate elements appear within a partition.

Next, we construct the \(d\) partitions for the odd dimension \(d\) by modifying the partitions from Lemma \ref{lemma1}: 
For each partition \(T_t^{d+1}\), remove the tuple \((C_t, d)\), which contains the element \(d\). Replace \((C_t, d)\) with the singleton \(\{C_t\}\), ensuring that each element \(c_t\) in \(T_t^{d+1}\) is retained without the element \(d\). This modification ensures that all tuples are properly accounted for in the new partitions.

Finally, after these modifications, the resulting partitions \(\{T_t^d : t = 1, \ldots, d\}\) satisfy the requirements for dimension \(d\). Every tuple \((j, k)\) with \(0 \leq j < k \leq d - 1\) is included in exactly one partition. Additionally, the single elements \(\{C_t\}_{t=1}^d\) cover all integers in the set \(\{0, 1, \ldots, d-1\}\). This follows because the original set of pairs \(\{(C_t, d)\}\) in \(d+1\) partitions corresponds to \(\{(k, d)\}_{k=0}^{d-1}\), and replacing \(d\) with \(\{C_t\}\) maintains full coverage. This explains why the computational basis is not required in the DDB construction for odd dimensions \(d\). 

Thus, the construction guarantees \(d\) partitions for any odd dimension \(d\), completing the proof.
\end{proof}

\subsection{Algorithm for Arbitrary Dimension Based on the Lemmas}

\begin{algorithm}[H]
  \caption{Construct Minimal Partitions}
  \label{alg1}
  \begin{algorithmic}[1]
    \Require
      Positive integer \(d\)
    \Ensure
      \(d-1\) (even \(d\)) or \(d\) (odd \(d\)) partitions such that every pair \((j, k)\), \(j < k\), is included in exactly one partition
    \Statex
    \Function{ConstructPartitions}{$d$}
      \State Construct a sequence of even numbers \(\{b_l\}\) where:
      \[
      b_1 = f(d), \quad b_2 = f(b_1/2), \quad \dots, \quad b_L = 2,
      \]
      with \(f(x) = \begin{cases} 
        x & \text{if } x \text{ is even}, \\
        x + 1 & \text{if } x \text{ is odd}.
      \end{cases}\)
      
      \For{\(l = L \to 1\)}
        \If{\(l = L\)}
          \State Initialize \(\{T_1^2 = \{(0,1)\}\}\) for \(b_L = 2\)
        \Else
          \State \Call{SubProcedure}{$b_l, b_{l+1}$}
        \EndIf
      \EndFor
      
      \If{\(d\) is odd}
        \State Replace \((C_t, d)\) in partitions with \(C_t\) (single elements)
      \EndIf
      
      \State \Return All partitions for \(d\)
    \EndFunction
    \Statex
    
    \Algphase{SubProcedure: Iterative Construction of Partitions}
    \Function{SubProcedure}{$b_l, b_{l+1}$}
      \If{\(b_l/2\) is even}
        \State Relabel \(\{T_1^{b_{l+1}}, \dots, T_{b_{l+1}-1}^{b_{l+1}}\}\)
        \For{\(t = 1 \to b_l/2 - 1\)}
          \State Define \(T_t^{b_l} \gets T_t^{b_{l+1}} \cup (T_t^{b_{l+1}} + b_l/2)\)
        \EndFor
        \For{\(t = b_l/2 \to b_l - 1\)}
          \State Define \(T_t^{b_l} \gets \{(j, b_l/2 + [(j + t) \bmod b_l/2]) : j \in [0, b_l/2 - 1]\}\)
        \EndFor
      \Else
        \State Relabel \(\{T_1^{b_{l+1}}, \dots, T_{b_{l+1}}^{b_{l+1}}\}\)
        \For{\(t = 1 \to b_l/2\)}
          \State Define \(T_t^{b_l} \gets T_t^{b_{l+1}} \cup (T_t^{b_{l+1}} + b_l/2) - (j_t, b_l/2) - (b_l/2 + j_t, b_l) \cup (j_t, b_l/2 + j_t)\),
          \Statex \hspace{\algorithmicindent} where \((j_t, b_l/2) \in T_t^{b_{l+1}}\) and \(j_t\) is the neighbor of \(b_l/2\)
        \EndFor
        \For{\(t = b_l/2 + 1 \to b_l - 1\)}
          \State Define \(T_t^{b_l} \gets \{(j, b_l/2 + [(j + t) \bmod b_l/2]) : j \in [0, b_l/2 - 1]\}\)
        \EndFor
      \EndIf
    \EndFunction
  \end{algorithmic}
\end{algorithm}

\subsection{Examples of Minimal Partitions and Corresponding DDBs}
The deterministic algorithm can generate the minimal partitions required for any dimension \(d\). We will first present the partitions for 1-qubit, 2-qubit, and 3-qubit systems, which have dimensions of 2, 4, and 8, respectively. Subsequently, we will provide the corresponding DDBs for each partition. Then, we will demonstrate the partition and DDB construction for the odd dimension \(d = 7\).

\paragraph{Case \(d = 2\) (1-qubit)}
For \(d = 2\), there is only one tuple \((0, 1)\) with \(0 \leq j < k \leq 1\). Thus, the single partition is:
\begin{equation}
\mathbb{T}^2 = \{T_1^2\}, \quad T_1^2 = \{(0, 1)\}.
\end{equation}
The three informationally complete (IC) DDBs are:
\begin{equation}\label{1-qubit}
\begin{aligned}
  \mathcal{B}_0^2 &= \{|0\rangle, |1\rangle\}, \\
  \mathcal{B}_1^2 &= \left\{ \frac{|0\rangle \pm |1\rangle}{\sqrt{2}} \right\}, \quad
  \mathcal{C}_1^2 = \left\{ \frac{|0\rangle \pm i|1\rangle}{\sqrt{2}} \right\}.
\end{aligned}
\end{equation}

\paragraph{Case \(d = 4\) (2-qubit)}
For \(d = 4\), the three partitions are:
\begin{equation}
\mathbb{T}^4 = \{T_1^4, T_2^4, T_3^4\}, \quad \text{where}
\end{equation}
\begin{equation}\label{2-qubit-partitions}
\begin{aligned}
  T_1^4 &= \{(0, 1), (2, 3)\}, \\
  T_2^4 &= \{(0, 2), (1, 3)\}, \quad
  T_3^4 = \{(0, 3), (1, 2)\}.
\end{aligned}
\end{equation}
The seven IC DDBs are:
\begin{equation}\label{2-qubit}
\begin{aligned}
  \mathcal{B}_0^4 &= \{|0\rangle, |1\rangle, |2\rangle, |3\rangle\}, \\
  \mathcal{B}_1^4 &= \left\{ \frac{|0\rangle \pm |1\rangle}{\sqrt{2}}, \frac{|2\rangle \pm |3\rangle}{\sqrt{2}} \right\}, \quad
  \mathcal{C}_1^4 = \left\{ \frac{|0\rangle \pm i|1\rangle}{\sqrt{2}}, \frac{|2\rangle \pm i|3\rangle}{\sqrt{2}} \right\}, \\
  \mathcal{B}_2^4 &= \left\{ \frac{|0\rangle \pm |2\rangle}{\sqrt{2}}, \frac{|1\rangle \pm |3\rangle}{\sqrt{2}} \right\}, \quad 
  \mathcal{C}_2^4 = \left\{ \frac{|0\rangle \pm i|2\rangle}{\sqrt{2}}, \frac{|1\rangle \pm i|3\rangle}{\sqrt{2}} \right\}, \\
  \mathcal{B}_3^4 &= \left\{ \frac{|0\rangle \pm |3\rangle}{\sqrt{2}}, \frac{|1\rangle \pm |2\rangle}{\sqrt{2}} \right\}, \quad 
  \mathcal{C}_3^4 = \left\{ \frac{|0\rangle \pm i|3\rangle}{\sqrt{2}}, \frac{|1\rangle \pm i|2\rangle}{\sqrt{2}} \right\}.
\end{aligned}
\end{equation}

\paragraph{Case \(d = 8\) (3-qubit)}
For \(d = 8\), the seven partitions are:
\begin{equation}\label{3-qubit-partitions}
\begin{aligned}
  T_1^8 &= \{(0, 1), (2, 3), (4, 5), (6, 7)\}, \\
  T_2^8 &= \{(0, 2), (1, 3), (4, 6), (5, 7)\}, \\
  T_3^8 &= \{(0, 3), (1, 2), (4, 7), (5, 6)\}, \\
  T_4^8 &= \{(0, 4), (1, 5), (2, 6), (3, 7)\}, \\
  T_5^8 &= \{(0, 5), (1, 6), (2, 7), (3, 4)\}, \\
  T_6^8 &= \{(0, 6), (1, 7), (2, 4), (3, 5)\}, \\
  T_7^8 &= \{(0, 7), (1, 4), (2, 5), (3, 6)\}.
\end{aligned}
\end{equation}
The fifteen IC DDBs are constructed following similar principles as for \(d = 4\), with analogous forms of \(\mathcal{B}_t^8\) and \(\mathcal{C}_t^8\).

\paragraph{Case \(d = 7\) (Odd Dimension)}
For \(d = 7\), the seven partitions are constructed as follows:
\begin{equation}\label{7-partitions}
\begin{aligned}
  T_1^7 &= \{(0, 1), (2, 3), (4, 5), 6\}, \quad
  T_2^7 = \{(0, 2), (1, 3), (4, 6), 5\}, \\
  T_3^7 &= \{(0, 3), (1, 2), 4, (5, 6)\}, \quad
  T_4^7 = \{(0, 4), (1, 5), (2, 6), 3\}, \\
  T_5^7 &= \{(0, 5), (1, 6), 2, (3, 4)\}, \quad
  T_6^7 = \{(0, 6), 1, (2, 4), (3, 5)\}, \\
  T_7^7 &= \{0, (1, 4), (2, 5), (3, 6)\}.
\end{aligned}
\end{equation}
The fourteen IC DDBs are similarly defined, with explicit forms for \(\mathcal{B}_t^7\) and \(\mathcal{C}_t^7\).

The computational basis \(\mathcal{B}_0^7\) is excluded, as all elements \(\{|0\rangle, \cdots, |6\rangle\}\) are already covered by other DDBs.

\section{Numerical experiments on dimension six}

While it is known that the PMs onto \(d+1\) MUBs are minimal and optimal QST strategy for a \(d\)-dimensional system, their construction for each dimension \(d\) is still an open question. The first dimension for which \(d+1\) MUBs have not been constructed is 6, corresponding to a qubit-qutrit system, \(\mathcal{H}_2\otimes \mathcal{H}_3\).

For \(d=6\), there are 5 partitions in \(\mathbb{T}^6\),
\begin{eqnarray}
&&T^6_1=\{(0,1),(2,5),(3,4)\}, \quad T^6_2=\{(0,2),(1,4),(3,5)\}, \quad T^6_3=\{(0,3),(1,2),(4,5)\}, \nonumber \\
&&T^6_4=\{(0,4),(1,5),(2,3)\}, \quad T^6_5=\{(0,5),(1,3),(2,4)\}.
\end{eqnarray}

The corresponding 11 IC DDBs are denoted as \(\{\mathcal{B}_0^6,\mathcal{B}_1^6,\ldots,\mathcal{B}_5^6,\mathcal{C}_1^6,\ldots,\mathcal{C}_5^6\}\),
\begin{equation}
\begin{aligned}
  \mathcal{B}_0^6 &= \{|0\rangle, \dots, |5\rangle\}, \\
  \mathcal{B}_1^6 &= \left\{ |\phi_{01}^{\pm}\rangle, |\phi_{25}^{\pm}\rangle, |\phi_{34}^{\pm}\rangle \right\}, \quad
  \mathcal{C}_1^6 = \left\{ |\psi_{01}^{\pm}\rangle, |\psi_{25}^{\pm}\rangle, |\psi_{34}^{\pm}\rangle \right\}, \\
  \mathcal{B}_2^6 &= \left\{ |\phi_{02}^{\pm}\rangle, |\phi_{14}^{\pm}\rangle, |\phi_{35}^{\pm}\rangle \right\}, \quad
  \mathcal{C}_2^6 = \left\{ |\psi_{02}^{\pm}\rangle, |\psi_{14}^{\pm}\rangle, |\psi_{35}^{\pm}\rangle \right\}, \\
  \mathcal{B}_3^6 &= \left\{ |\phi_{03}^{\pm}\rangle, |\phi_{12}^{\pm}\rangle, |\phi_{45}^{\pm}\rangle \right\}, \quad
  \mathcal{C}_3^6 = \left\{ |\psi_{03}^{\pm}\rangle, |\psi_{12}^{\pm}\rangle, |\psi_{45}^{\pm}\rangle \right\}, \\
  \mathcal{B}_4^6 &= \left\{ |\phi_{04}^{\pm}\rangle, |\phi_{15}^{\pm}\rangle, |\phi_{23}^{\pm}\rangle \right\}, \quad
  \mathcal{C}_4^6 = \left\{ |\psi_{04}^{\pm}\rangle, |\psi_{15}^{\pm}\rangle, |\psi_{23}^{\pm}\rangle \right\}, \\
  \mathcal{B}_5^6 &= \left\{ |\phi_{05}^{\pm}\rangle, |\phi_{13}^{\pm}\rangle, |\phi_{24}^{\pm}\rangle \right\}, \quad
  \mathcal{C}_5^6 = \left\{ |\psi_{05}^{\pm}\rangle, |\psi_{13}^{\pm}\rangle, |\psi_{24}^{\pm}\rangle \right\}.
\end{aligned}
\end{equation}

We tested our proposal numerically in a 6-dimensional system. We examined four quantum state types: (a) the maximally mixed state \(I/6\), (b) a balanced state \(\frac{1}{6}\sum_{k,j=0}^5|k\rangle\langle j|\), (c) a separable state, and (d) an entangled state. These states are Hermitian, semi-definite, and unit trace density matrices. States (a) and (b) were directly generated in our simulation, while states (c) and (d) were prepared using local unitary transformations \(U_2 \in \mathcal{H}_2\) and \(U_3 \in \mathcal{H}_3\), distributed uniformly according to the Haar measure. As \(U_2\) and \(U_3\) are local, the entanglement of the resulting state remains unchanged, which can be verified using the Peres-Horodecki criterion. We prepared states (c) and (d) as \(U_2\otimes U_3 \ket{\phi}\), where \(\ket{\phi}\) represents \(\ket{0}\) or \(\ket{1}+\ket{2}+\ket{3}+\ket{5}\), respectively.

For the reconstruction of an unknown density matrix \(\rho\), we utilize two methods. The first method is based on semi-definite programming, where \(\tilde{\rho}\) represents the estimated form of \(\rho\) and is obtained through a parameterized matrix \(X \geq 0\). The mathematical model is formulated as follows:
\begin{eqnarray}
  \tilde{\rho} = \mbox{arg}\mathop{\mbox{min}}_{X}\sum_{i=1}^{66}  \|( \mbox{tr}(X E_i)-p_i)\|,
\end{eqnarray}
where \(\| \cdot \|\) is a norm function, \(E_i\) are from the bases of (\(\mathcal{B}_0,\ldots,\mathcal{B}_5,\mathcal{C}_1,\ldots,\mathcal{C}_5\)), and \(p_i\) are the measured probabilities on \(E_i\) by the unknown density matrix \(\rho\).

The second approach is direct reconstruction. This method utilizes a total of 36 probabilities, which is half of the available data.

Consequently, numerical experiments were conducted 20 times for all tested states. The Monte Carlo method was utilized to simulate \(p_i\) with \(100\times2^{\mbox{Num}}\) shots. The results of these numerical experiments are illustrated in SFig.(\ref{simu2}), where the infidelities are represented by the Frobenius distance:
\begin{eqnarray}
  F_{f} = \sqrt{\mbox{trace}((\rho-\tilde{\rho})\cdot (\rho-\tilde{\rho})^{\dagger})}.
\end{eqnarray}
Error bars come from standard deviations of 20 repetitions of the simulation.

\begin{figure}[!htb]
  \begin{center}
    \includegraphics[width=1\textwidth]{S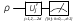}
    \caption{Numerical experiment for qubit-qutrit system. (a) Maximally mixed state, (b) balanced state, (c) separable state, and (d) entangled state are tested. The infidelities are calculated by Frobenius distance. Error bars come from the standard deviation of 20 repetitions of the simulation. Method 1 is via semi-definite programming, and method 2 is via direct estimation. The second line gives the logarithmic scale form.}
    \label{simu2}
  \end{center}
\end{figure}

\section{Rigorous proofs of Result 2}

\begin{lemma}\label{lemma1_proof}
    When \( d = 2^n \), the pair \((j, k)\) with \(0 \le j < k \le 2^n - 1\) and \(|j - k| \le r\) can be found in at most \( O(r (n - \log r)) \) partitions for minimal DDBs.
\end{lemma}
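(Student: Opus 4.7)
My approach is to exploit the recursive merge/cross structure in Eqs.~(\ref{eq1})--(\ref{eq2}) and to count, one crossing level at a time, how many partitions are actually hit by pairs $(j,k)$ with $|j-k|\le r$. I would first note that the merging rule (\ref{eq1}) preserves the partition index, so every partition $T^{2^n}_m$ at level $n$ is the lift of a crossed partition originally introduced by (\ref{eq2}) at the unique level $\ell(m)=\lfloor\log_2 m\rfloor+1$. Correspondingly, a pair $(j,k)$ with $j<k$ lies in the partition whose crossing level equals $\ell+1$, where $\ell$ is the position of the most significant bit on which the binary expansions of $j$ and $k$ disagree; this label is well-defined and determined by the pair alone.

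Fixing a crossing level $\ell$, I would count the distinct crossed partitions at that level that are touched by pairs with $|k-j|\le r$. Within the relevant block of size $2^\ell$, I set local coordinates $s:=k_{\rm loc}-2^{\ell-1}$ and $t:=2^{\ell-1}-1-j_{\rm loc}$, so that $k-j=s+t+1$. Substituting into (\ref{eq2}) yields the clean relation
\begin{equation*}
\hat m \;\equiv\; s+t+1 \pmod{2^{\ell-1}},
\end{equation*}
where $\hat m := m-2^{\ell-1}\in[0,2^{\ell-1}-1]$ is the normalized crossed-partition index. Under the constraint $s+t+1\le r$ with $s,t\ge 0$, I would argue that the image of this map has cardinality exactly $\min(r,2^{\ell-1})$: for $r\le 2^{\ell-1}$ the sums run injectively over $\{1,\dots,r\}$, while for $r>2^{\ell-1}$ they sweep through a full period modulo $2^{\ell-1}$ and every residue appears.

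Summing across crossing levels and splitting at $L=\lfloor\log_2 r\rfloor$,
\begin{equation*}
\sum_{\ell=1}^{n}\min(r,2^{\ell-1}) \;=\; \sum_{\ell=1}^{L+1}2^{\ell-1}+\sum_{\ell=L+2}^{n}r \;\le\; 2r + r(n-L-1) = O(r(n-\log r)),
\end{equation*}
which matches the claimed bound. The main obstacle is the middle step: the modular-arithmetic enumeration of the realized $\hat m$ values requires care, since one must verify both the no-collision property when $r\le 2^{\ell-1}$ and the surjectivity onto $[0,2^{\ell-1}-1]$ when $r>2^{\ell-1}$. Once this bookkeeping is settled, the rest is a routine geometric-series estimate.
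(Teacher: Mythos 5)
Your proposal is correct and takes essentially the same route as the paper's proof: both classify partitions by the recursion level at which they first appear as crossed partitions (merged partitions adding nothing new), bound the levels up to $\log_2 r$ by the total partition count $O(r)$, and bound each higher level by $r$, ending with the identical split-sum estimate $O(r(n-\log r))$. Your modular-arithmetic computation showing the per-level count is exactly $\min(r,2^{\ell-1})$ is a rigorous filling-in of the step the paper merely asserts (that each iteration past level $m$ contributes at most $r$ relevant partitions), so it is a slightly sharper rendition of the same argument.
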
 

\textbf{Proof.} Using \( n \) iterations, \( 2^n - 1 \) partitions are constructed such that each pair \((j, k)\) with \(0 \le j < k \le 2^n - 1\) is covered. When \( r \ll 2^n \), we can always find \( m \) such that \( 2^{m-1} < r \le 2^m \).

At iteration \( t = m \) for dimension \( 2^m \), a total of \( 2^m - 1 \) partitions have been constructed by Result~\ref{result1}. Thus, at most \( 2^m - 1 \) partitions contain \((j, k)\) for \(0 \le j < k \le 2^m - 1\). At iteration \( t = m + 1 \), new partitions are constructed by Eq. (\ref{merge1}) and Eq. (\ref{insect1}), resulting in at most \( 2^m - 1 \) and \( r \) partitions that satisfy \(|j - k| \le r\), \(0 \le j < k \le 2^{m+1} - 1\). 
This means that each iteration at most $r$ new partitions are added for the target pairs. 
Therefore, from \( t = m \) to \( t = \log d = n \), the number of relevant partitions is less than 
\[
2^m - 1 + r(\log d - m) < r(\log d + 2 - m) < r \log(4d/r). 
\] 
\qed

\begin{lemma}\label{lemma2_proof}
    For any general dimension \( d \), the pair \((j, k)\) with \(0 \le j < k \le d - 1\) and \(|j - k| \le r\) can be found in at most \( O(r \log \frac{d}{r})) \) partitions for minimal DDBs.
\end{lemma}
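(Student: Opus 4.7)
The plan is to prove Lemma \ref{lemma2_proof} by strong induction on $d$, mirroring the recursive construction in Lemmas \ref{lemma1} and \ref{lemma2} of Appendix A. Let $N(d,r)$ denote the number of partitions in the minimal-DDB construction for dimension $d$ that contain at least one pair $(j,k)$ with $|j-k|\le r$. I aim to establish a recurrence of the form $N(d,r) \le N(d',r) + r$, with $d' \in \{d/2,\, d/2+1,\, d+1\}$ depending on the parities of $d$ and $d/2$, together with the base case $d = O(r)$, for which $N(d,r) \le d-1 = O(r)$ trivially. Unrolling the recurrence over its $O(\log(d/r))$ levels then yields $N(d,r) = O(r\log(d/r))$, as required.

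For the inductive step I would split into three cases. (i) For even $d$ with $d/2$ also even, the merged partitions $T^d_t = T^{d/2}_t \cup (T^{d/2}_t + d/2)$ preserve intra-pair distances, so $T^d_t$ contains a pair of distance at most $r$ exactly when $T^{d/2}_t$ does; by the inductive hypothesis, at most $N(d/2,r)$ merged partitions qualify. For the crossed partitions $T^d_t = \{(j,\, d/2 + (t+j) \bmod d/2) : j \in [0, d/2-1]\}$, a cross-half pair $(j,\, d/2+m)$ has distance $d/2+m-j$, and requiring this to be at most $r$ (with $r < d/2$) forces the shift $t - d/2 \equiv (m-j) \bmod d/2$ to lie in $\{1, 2, \ldots, r\}$, so at most $r$ crossed partitions qualify, yielding $N(d,r) \le N(d/2,r) + r$. (ii) For even $d$ with $d/2$ odd, I would borrow from dimension $d/2+1$ via the modification that swaps $\{(c_t, d/2),\, (d/2+c_t, d)\}$ for $\{(c_t, d/2+c_t)\}$; since the inserted pair has distance exactly $d/2 \gg r$ and the removed pairs only decrease the short-distance count of each partition, the inductive bound $N(d/2+1, r)$ still dominates the merged contribution, and the crossed count is handled by the same modular argument with index range $[d/2+1, d-1]$, giving $N(d,r) \le N(d/2+1, r) + r$. (iii) For odd $d$, the dimension-$d$ partitions arise from dimension $d+1$ by replacing each tuple $(C_t, d)$ with the singleton $\{C_t\}$, which cannot produce any pair; thus $N(d,r) \le N(d+1, r)$, which reduces to case (i) or (ii) on the even dimension $d+1$.

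The main obstacle I anticipate is the careful bookkeeping in case (ii): I need to verify that the modification step removes exactly the two pairs $(c_t, d/2)$ and $(d/2+c_t, d)$ (both of distance $d/2 - c_t$) and replaces them with a pair of distance $d/2$, so that the count of short-distance pairs per merged partition never exceeds what dimension $d/2+1$ already guarantees. A secondary subtlety is that the crossed index range shifts from $[d/2, d-1]$ in case (i) to $[d/2+1, d-1]$ in case (ii); I will verify by direct modular computation that the set of valid shifts still has size at most $r$ in both cases. Once these two checks are in place, the rest is a mechanical unrolling: summing $r$ across the $O(\log(d/r))$ levels of the recursion, with base value $O(r)$, yields the claimed $O(r \log(d/r))$ bound, which matches the $d=2^n$ case of Lemma \ref{lemma1_proof} up to constants.
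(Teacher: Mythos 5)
Your proposal is correct and follows essentially the same route as the paper's proof: both track the dimension chain $b_1=f(d),\, b_2=f(b_1/2),\dots$ of Algorithm~1, bound the merged/modified partitions by the previous level (merging preserves pair distances, and the case-(ii) modification only removes pairs while inserting one of distance $d/2>r$), charge at most $r$ crossed partitions per doubling step via the same modular-shift computation, and start from a trivial $O(r)$ bound at the level where the dimension is $\Theta(r)$. Your recurrence $N(d,r)\le N(d',r)+r$ unrolled over $O(\log(d/r))$ levels is just a repackaging of the paper's direct summation $2^m-1+r(L-m)<r\log(4d/r)$, with the minor added virtue that you make explicit the case-(ii) bookkeeping the paper leaves implicit.
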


\textbf{Proof.} Based on the construction in Algorithm~\ref{alg1}, we iteratively construct $b_{k}-1$ partitions for dimension $b_k$, where $k=L,L-1,\cdots,1$. 
Here \(b_L = 2\) and \(b_1 = f(d)\). 

Firstly, we prove that the number of iterations is exactly \(L = \lceil \log d \rceil\). 
For example, consider \(d = 100\). We should construct the partitions iteratively for the dimensions:
\[
b_7 = 2, \; b_6 = 4, \; b_5 = 8, \; b_4 = 14, \; b_3 = 26, \; b_2 = 50, \; b_1 = 100.
\] 
With \(\lceil \log 100 \rceil = 7\) iterations, we obtain the partitions for \(d = 100\). 

For any general \( d \), we observe that \( b_{k-1} = 2b_k \) or \( b_{k-1} = 2b_k - 2 \), leading to the inequality \( b_{k-1} \leq 2b_k \). If the iteration count is \( \lceil \log d \rceil - 1 \), the maximum value of \( b_1 \) is \( 2^{\lceil \log d \rceil - 1} \). 
However, \( 2^{\lceil \log d \rceil - 1} < d \). This results in a contradiction since \( b_1 \) should equal \( d \) or \( d + 1 \). On the other hand, it is easy to observe that if we start with dimension \( b_1=f(d) \) and iterate \( L = \lceil \log d \rceil \) times, the final value can always be reduced to 2. This is because \( 2^{L-1} < b_1 \le 2^L \), and generally \( 2^{L-k} < b_k \le 2^{L-k+1} \) for \( k=2,\cdots,L \). 

Similar to the case when \( d=2^n \), there is always a number \( m \) for the rank $r$ such that \( 2^{m-1} < r \le 2^m \). 
At the \( m \)-iteration, we should construct the partitions for dimension $b_{L-{m-1}}$. 
Thus, there are at most \( b_{L-m+1}-1\le 2^m-1 \) partitions for dimension \( b_{L+1-m} \). 
Repeating the procedure until the iteration \( L \) for dimension \( b_1 \), the new constructions are based on Eq. (\ref{merge2}) and Eq. (\ref{insect2}). The number of partitions containing the required elements is less than 
\[
2^m-1+r( L-m)< r(1+\lceil \log d \rceil -m)<r\log (4d/r), 
\]
where \( -m \le -\log r \). Then the number of DDBs required is \(O(r \log(d/r))\).  \qed 

Each partition corresponds to two eigenbases. Together with \(\mathcal{B}_0\), \(O(r \log \frac{d}{r})\) DDBs can reconstruct the elements \(\{\rho_{jk} : j,k \in C\}\).

\section{Circuits analysis} \label{sec:circuit}

We may as well label the $2^{n+1}-1$ DDBs on $n$-qubit systems as follows: 
\begin{eqnarray}
	\{\mathcal{B}_0^{2^n},  {\mathcal{B}_j^{2^n}, \mathcal{C}_j^{2^n}:j=1,\cdots,2^n-1}\}.
\end{eqnarray}
The computational basis is $\mathcal{B}_0^{2^n}=\{|0\rangle,\cdots,|2^n-1\rangle\}$. 
The basis $\mathcal{B}_j^{2^n}$ and $\mathcal{C}_j^{2^n}$ are dually designed for the same partition. Here we consider the $2^{n+1}-2$ circuits to transform the computational basis to nontrivial DDBs $\{\mathcal{B}_j^{2^n}, \mathcal{C}_j^{2^n}:j=1,\cdots,2^n-1\}$.

\textit{1-qubit:}  The DDBs $\mathcal{B}_1^2$ and $\mathcal{C}_1^2$ are in Eq. (\ref{1-qubit}). The DDB circuits to map the computational basis into them are shown in SFig.(\ref{1qubit}). 
\begin{figure}[!htb]
	\[
	\Qcircuit @C=0.8em @R=0.8em {
		 & \gate{H} & \qw  }
  ~~~~ 
  \Qcircuit @C=0.8em @R=0.8em {
		 & \gate{H} & \gate{S} & \qw  }
	\]
	\caption{Circuits to obtain $\mathcal{B}_1^2$ and $\mathcal{C}_1^2$. In order to perform PMs onto $\mathcal{B}_1^2$ and $\mathcal{C}_1^2$, we should apply $H^{\dag}=H$ and $\tilde{H}^{\dag}=(HS)^{\dag}=S^{\dag}H$ followed by computation basis measurement. They are exactly the Pauli measurement $X$ and Pauli measurement $Y$.}
	\label{1qubit}
\end{figure}
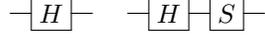 

\textit{2-qubit:} The six nontrivial DDBs are in Eq. (\ref{2-qubit}). With the binary form, the three DDBs (without coefficient $i$) for the three partitions are the following 
\begin{equation}
\begin{aligned}
	\mathcal{B}_1^4 &= \left\{ \frac{|00\rangle \pm |01\rangle}{\sqrt{2}}, \frac{|10\rangle \pm |11\rangle}{\sqrt{2}} \right\}, \\
	\mathcal{B}_2^4 &= \left\{ \frac{|00\rangle \pm |10\rangle}{\sqrt{2}}, \frac{|01\rangle \pm |11\rangle}{\sqrt{2}} \right\}, \\
	\mathcal{B}_3^4 &= \left\{ \frac{|00\rangle \pm |11\rangle}{\sqrt{2}}, \frac{|01\rangle \pm |10\rangle}{\sqrt{2}} \right\}.
\end{aligned}
\end{equation}
The corresponding circuits are depicted in Fig. (\ref{2-qubit1}):  
\begin{figure}[!htb]
  \[
  \Qcircuit @C=0.8em @R=0.8em {
\lstick{q_1}     & \qw & \qw \\
\lstick{q_2}     & \gate{H} & \qw }
 ~~~~~~~~~~
  \Qcircuit @C=0.8em @R=0.8em {
\lstick{q_1}  	  & \gate{H} &  \qw  \\
\lstick{q_2}	 & \qw 		& \qw  }
 ~~~~~~~~~~
  \Qcircuit @C=0.8em @R=0.8em {
\lstick{q_1}    	  & \gate{H} &\ctrl{1} &\qw    \\
\lstick{q_2}   	 		&\qw &\targ & \qw  }
 \]
  \caption{Circuits to obtain $\mathcal{B}_1^4,\mathcal{B}_2^4,\mathcal{B}_3^4$. It is easy to verify that they map the computational basis $\{|00\rangle,|01\rangle,|10\rangle,|11\rangle\}$ into the designed ones. By changing the gate $H$ of the circuits into $\tilde{H}$, we will obtain the dual DDBs $\mathcal{C}_1^4,\mathcal{C}_2^4,\mathcal{C}_3^4$.}
  \label{2-qubit1}
\end{figure}
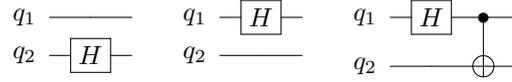

\textit{$n$-qubit:} 
Denote the \(2^n-1\) unitary operations corresponding to the nontrivial DDBs \(\mathcal{B}_t^{2^n}\) as \(\{U_t^{2^n} : t = 1, \cdots, 2^n - 1\}\). The dual DDBs \(\{\mathcal{C}_t^{2^n}\}\) are obtained by adding a global phase factor \(i\) to the basis states. Replacing the \(H\) gate in the circuits with \(\tilde{H}=HS\) yields the circuits for these dual DDBs \(\mathcal{C}_t^{2^n}\). 
At the last iteration of \((n-1)\)-qubit case, the unitary operations  \(\{U_t^{2^{n-1}} : t = 1, \cdots, 2^{n-1} - 1\}\) map the computational basis \(\mathcal{B}_0^{2^{n-1}}\) into \(\mathcal{B}_t^{2^{n-1}}\). 
%The basis \(\mathcal{B}_t^{2^{n-1}}\) is constructed for the partition \(T_t^{2^{n-1}}\).

For the \(n\)-qubit case, we have \(T_t^{2^{n}} = T_t^{2^{n-1}} \cup (T_t^{2^{n-1}} + 2^{n-1})\) for \(t = 0, \cdots, 2^{n-1} - 1\), and \(T_t^{2^{n}} = \{(j, 2^{n-1} + [(j + t) \bmod 2^{n-1}]) : 0 \le j \le 2^{n-1} - 1\}\) for \(t = 2^{n-1}, \cdots, 2^{n} - 1\).

Thus, when \(t = 0, \cdots, 2^{n-1} - 1\), we have \(U_t^{2^n} = I \otimes U_t^{2^{n-1}}\). This is because the basis states of \(\mathcal{B}_t^{2^{n-1}}\) are of the form \(|k_1\rangle \pm |k_2\rangle\). The partition \(T_t^{2^{n}}\) is iteratively constructed by \(T_t^{2^{n-1}} \cup (T_t^{2^{n-1}} + 2^{n-1})\), where \(t = 0, \cdots, 2^{n-1} - 1\). Therefore, the basis states of \(\mathcal{B}_k^{2^{n}}\) are in the form \(|0\rangle(|k_1\rangle \pm |k_2\rangle)\) or \(|1\rangle(|k_1\rangle \pm |k_2\rangle)\).

When \(t = 2^{n-1}\), we have \(U_{2^{n-1}}^{2^n} = H \otimes I^{\otimes n-1}\). This follows because the partition \(T_{2^{n-1}}^{2^n}\) consists of \(\{(0, 2^{n-1}), (1, 2^{n-1} + 1), \cdots, (2^{n-1} - 1, 2^{n} - 1)\}\). These numbers can be expressed in binary form, and the corresponding basis states are given by \(\{(|0\rangle \pm |1\rangle) \otimes |j_2, \cdots, j_n\rangle : j_2, \cdots, j_n = 0, 1\}\). Hence, \(U_{2^{n-1}}^{2^n} = H \otimes I^{\otimes n-1}\).

When \(t = 2^{n-1} + 1, \cdots, 2^n - 1\), we can express \(t\) as \(2^{n-1} + j\). Then, \(U_{k}^{2^n} =  [|0\rangle\langle 0|\otimes I + |1\rangle\langle 1|\otimes (\mathcal{V}_{n-1})^j] \cdot [H \otimes I^{\otimes (n-1)}]\), where 
\begin{equation}
    \mathcal{V}_{n-1} =\mathcal{U}^{\dag}_{n-1}= \sum_{m=0}^{2^{n-1} - 1} |m + 1 \bmod 2^{n-1}\rangle\langle m|.
\end{equation}

This result is derived as follows. The partition \(T_{t}^{2^n}\) consists of \(\{(0, 2^{n-1} + j), (1, 2^{n-1} + (1 + j \bmod 2^{n-1})), \cdots, (2^{n-1} - 1, 2^{n-1} + (2^{n-1} - 1 + j \bmod 2^{n-1}))\}\). Expressing these numbers in binary form, the corresponding basis states are given by \(\{|0\rangle \otimes |j_2, \cdots, j_n\rangle + |1\rangle \otimes |[(j_2, \cdots, j_n) + j] \bmod 2^{n-1}\rangle : j_2, \cdots, j_n = 0, 1\}\). This basis can be obtained by applying the conditional shift operation \(|0\rangle\langle 0|\otimes I + |1\rangle\langle 1|\otimes (\mathcal{U}^{\dag}_{n-1})^j\) on the basis of \(\mathcal{B}_{2^{n-1}}^{2^n}\).

\begin{figure}[!htb]
  \[
  \Qcircuit @C=1em @R=1em {
\lstick{q_1}   & \qw & \qw & \qw    &&&  &\gate{H} &\qw  &&& &\qw 	  & \gate{H} &\ctrl{1} &\qw&\qw\\
\lstick{q_2}   & \qw & \multigate{3}{U_t^{2^{n-1}}} & \qw &&&  &\qw &\qw  &&& & \qw & \qw & \multigate{3}{\mathcal{V}_{n-1}^{ j}} & \qw&\qw\\
\lstick{\vdots}   &   &   &   &&& & \vdots  &  &&& &  & \vdots   &   & &\\
\lstick{q_{n-1}}   & \qw & \ghost{U_t^{2^{n-1}}} & \qw &&& &\qw &\qw  &&& & \qw & \qw &\ghost{\mathcal{V}_{n-1}^{ j}}& \qw&\qw\\
\lstick{q_n}   & \qw & \ghost{U_t^{2^{n-1}}} & \qw &&& &\qw &\qw  &&& & \qw & \qw& \ghost{\mathcal{V}_{n-1}^{ j}} & \qw&\qw}
 \]
  \caption{Circuits to obtain all DDBs $\mathcal{B}_t^{2^n}$.  The left circuit represents $\{U_t^{2^n}:t=1,\cdots,2^{n-1}-1\}$. The middle circuit represent $U_{2^{n-1}}^{2^n}$. The right circuit represents $\{U_t^{2^n}:t=2^{n-1}+1,\cdots,2^{n}-1\}$, where $j=t-2^{n-1}$. Similarly, by changing $H$ gate into $\tilde{H}$, we will obtain the dual DDBs for $\mathcal{C}_t^{2^n}$.}
  \label{N-qubit1}
\end{figure}
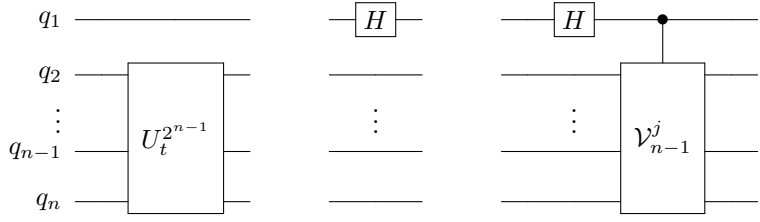

\textbf{Remark:} The circuits shown in the middle and right panels of Fig.~(\ref{N-qubit1}) are newly constructed during the \(n\)-th iteration. At each \(k\)-th iteration (\(k=2, \ldots, n-1\)), a total of \(2^k\) circuits are depicted in Fig.~(\ref{k-qubit1}). Specifically, the left panel represents the circuits from the \((k-1)\)-th iteration, the middle panel corresponds to the circuits for \(\mathcal{B}_{2^{k-1}}^{2^k}\), and the right panel illustrates the crossed partitions introduced during the \(k\)-th iteration. 

From iteration 1 to \(n\), the total number of constructed circuits is given by \(2^0 + 2^1 + \ldots + 2^{n-1} = 2^n - 1\). 
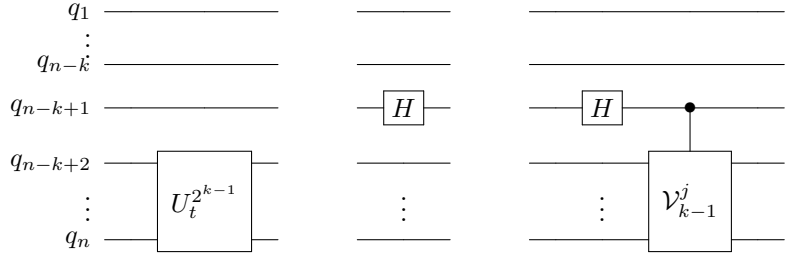
\begin{figure}[!htb]
  \[
  \Qcircuit @C=1em @R=1em {
\lstick{q_1}   & \qw & \qw & \qw    &&&  &\qw &\qw  &&& &\qw 	  & \qw &\qw &\qw&\qw\\
\lstick{\vdots}   &    &     &    &&&  &   &  &&& &   &    &   &  & \\
\lstick{q_{n-k}}   & \qw & \qw & \qw &&& &\qw &\qw  &&& &\qw & \qw & \qw & \qw&\qw\\
\lstick{q_{n-k+1}}   & \qw & \qw & \qw &&& &\gate{H} &\qw  &&& &\qw & \gate{H} & \ctrl{1} & \qw&\qw\\
\lstick{q_{n-k+2}}  & \qw & \multigate{2}{U_t^{2^{k-1}}}  & \qw &&& &\qw &\qw  &&& & \qw & \qw &\multigate{2}{\mathcal{V}_{k-1}^{ j}}& \qw&\qw\\
\lstick{\vdots}  &   &    &   &&& &\vdots &   &&& &   & \vdots  & &  & \\
\lstick{q_n}  & \qw & \ghost{U_t^{2^{k-1}}}  & \qw &&& &\qw &\qw  &&& & \qw & \qw& \ghost{\mathcal{V}_{k-1}^{ j}} & \qw&\qw}
 \]
  \caption{Circuits to obtain $\mathcal{B}_t^{2^k}$. The operations involve leaving the first \(n-k\) qubits unchanged by applying the identity gate \(I\). The left circuit represents the set \(\{U_t^{2^k}=I\otimes U_t^{2^{k-1}}  : t = 1, \dots, 2^{k-1} - 1\}\). The middle circuit corresponds to \(U_{2^{k-1}}^{2^k}\). The right circuit represents the set \(\{U_t^{2^k} : t = 2^{k-1} + 1, \dots, 2^k - 1\}\), where \(j = t - 2^{k-1}\). Similarly, by replacing the \(H\) gate with \(\tilde{H}\),  can obtain the dual DDBs for \(\mathcal{C}_t^{2^k}\).}
  \label{k-qubit1}
\end{figure}

\subsection{Circuits decomposition of permutational operation}

As previously mentioned, the PM onto the eigenbasis \(\{U|k\rangle : k = 0, \cdots, 2^n - 1\}\) can be implemented by first applying \(U^{\dagger}\), followed by a PM onto the computational basis \(\mathcal{B}_0^{2^n}\). 

The operation \(H\) (or \(\tilde{H}^{\dag}\)) followed by a single-qubit computational basis measurement (Pauli \(Z\) measurement) is equivalent to performing a Pauli \(X\) (or \(Y\)) measurement. Therefore, it suffices to decompose the conjugate transpose of the permutation operation depicted in Fig. (\ref{k-qubit1}):  
\begin{equation}\label{Pt}
   T_{j,k} = |0\rangle\langle 0| \otimes I + |1\rangle\langle 1| \otimes (\mathcal{U}_{k-1})^j,
\end{equation}
where \(\mathcal{U}_k = \sum_{m=0}^{2^k-1} |m-1 \bmod 2^k\rangle\langle m|\), with \(k = 2, \cdots, n-1\) and \(j = 1, \cdots, 2^{k-1} - 1\). These operations are the relabels of permutation operation $P_{j,k}$ in the main text.  

In the following, we demonstrate that all such permutation operations can be efficiently decomposed into a polynomial number of elementary gates.

\begin{Decomposition}\label{decom1}
For each \( j = 1, \dots, 2^{k}-1 \), even when \( j \) is exponentially large, \( (\mathcal{U}_{k})^j \) can be decomposed into a product of at most \( k \) specific unitary operations, namely \( (\mathcal{U}_{k})^{2^0}, (\mathcal{U}_{k})^{2^1}, \dots, (\mathcal{U}_{k})^{2^{k-1}} \).
\end{Decomposition}

\textbf{Analysis:} For each \( j \in \{1, \dots, 2^k-1\} \), we can represent \( j \) in binary form as \(\vec{j}= (j_0, j_1, \dots, j_{k-1}) \), where \( j_i \in \{0,1\} \). Specifically, \( j \) is expressed as:
\[
j = j_0 \times 2^0 + j_1 \times 2^1 + \cdots + j_{k-1} \times 2^{k-1}.
\]

Therefore, we have:
\[
(\mathcal{U}_{k})^j = [(\mathcal{U}_{k})^{2^0}]^{j_0} \times [(\mathcal{U}_{k})^{2^1}]^{j_1} \times \cdots \times [(\mathcal{U}_{k})^{2^{k-1}}]^{j_{k-1}}.
\]

The quantum circuit for this decomposition is illustrated in Fig.~(\ref{fig:ukj}). 

For example, if \( j_0 = 0 \), the term \([(\mathcal{U}_{k})^{2^0}]^{j_0}\) simplifies to the identity operation \( I \) and is omitted. Consequently, we only need to identify the nonzero elements in \(\{j_0, j_1, \dots, j_{k-1}\}\) and execute at most \( k \) operations from the set \(\{(\mathcal{U}_{k})^1, (\mathcal{U}_{k})^{2^1}, \dots, (\mathcal{U}_{k})^{2^{k-1}}\}\). Since these operations commute with one another, the execution order is flexible. \qed

\begin{figure}[h]
    \centering
    \[
    \Qcircuit @C=1em @R=1em {
    & \gate{(\mathcal{U}_{k})^j} & \qw &  & = & & &\qw & \gate{(\mathcal{U}_{k})^1} & \qw & \gate{(\mathcal{U}_{k})^{2^1}} & \qw & \cdots &   & \gate{(\mathcal{U}_{k})^{2^{k-1}}} & \qw    
    \gategroup{1}{9}{1}{9}{0.8em}{--}  \gategroup{1}{11}{1}{11}{0.8em}{--}  \gategroup{1}{15}{1}{15}{0.8em}{--}  \\
    &   &   & & &   & &   & j_0 &  & j_1 &   &   &   & j_{k-1} &    
}
\]
    \caption{Quantum circuit for \((\mathcal{U}_k)^j\). Dashed boxes indicate that the corresponding circuits are conditionally executed based on the binary coefficients \(j_0, j_1, \ldots, j_{k-1}\). For instance, if \(j_0 = 1\), the circuit for \((\mathcal{U}_k)^1\) is executed; otherwise, the identity operation \(I\) is applied.}
    \label{fig:ukj}
\end{figure}

\begin{Decomposition}\label{decom1}
For each \( j = 1, \dots, 2^{k}-1 \), even when \( j \) is exponentially large, \( (\mathcal{U}_{k})^j \) can be decomposed into a product of at most \( k \) specific unitary operations, namely \( (\mathcal{U}_{k})^{1}, (\mathcal{U}_{k})^{2^1}, \dots, (\mathcal{U}_{k})^{2^{k-1}} \).
\end{Decomposition}

\textbf{Analysis:} For each \( j \in \{1, \dots, 2^k-1\} \), we can represent \( j \) in binary form as \(\vec{j}= (j_0, j_1, \dots, j_{k-1}) \), where \( j_i \in \{0,1\} \). Specifically, \( j \) is expressed as:
\[
j = j_0 \times 2^0 + j_1 \times 2^1 + \cdots + j_{k-1} \times 2^{k-1}.
\]

Therefore, we have:
\[
(\mathcal{U}_{k})^j = [(\mathcal{U}_{k})^{2^0}]^{j_0} \times [(\mathcal{U}_{k})^{2^1}]^{j_1} \times \cdots \times [(\mathcal{U}_{k})^{2^{k-1}}]^{j_{k-1}}.
\]

The quantum circuit for this decomposition is illustrated in Fig.~(\ref{fig:ukj}). 

For example, if \( j_0 = 0 \), the term \([(\mathcal{U}_{k})^{2^0}]^{j_0}\) simplifies to the identity operation \( I \) and is omitted. Consequently, we only need to identify the nonzero elements in \(\{j_0, j_1, \dots, j_{k-1}\}\) and execute at most \( k \) operations from the set \(\{(\mathcal{U}_{k})^1, (\mathcal{U}_{k})^{2^1}, \dots, (\mathcal{U}_{k})^{2^{k-1}}\}\). Since these operations commute with one another, the execution order is flexible. \qed

\begin{figure}[h]
    \centering
    \[
    \Qcircuit @C=1em @R=1em {
    & \gate{(\mathcal{U}_{k})^j} & \qw &  & = & & &\qw & \gate{(\mathcal{U}_{k})^1} & \qw & \gate{(\mathcal{U}_{k})^{2^1}} & \qw & \cdots &   & \gate{(\mathcal{U}_{k})^{2^{k-1}}} & \qw    
    \gategroup{1}{9}{1}{9}{0.8em}{--}  \gategroup{1}{11}{1}{11}{0.8em}{--}  \gategroup{1}{15}{1}{15}{0.8em}{--}  \\
    &   &   & & &   & &   & j_0 &  & j_1 &   &   &   & j_{k-1} &    
}
\]
    \caption{Quantum circuit for \((\mathcal{U}_k)^j\). Dashed boxes indicate that the corresponding circuits are conditionally executed based on the binary coefficients \(j_0, j_1, \ldots, j_{k-1}\). For instance, if \(j_0 = 1\), the circuit for \((\mathcal{U}_k)^1\) is executed; otherwise, the identity operation \(I\) is applied.}
    \label{fig:ukj}
\end{figure}

\begin{Decomposition}
Consider the implementation of \( k \) unitary operations \( (\mathcal{U}_{k})^1, (\mathcal{U}_{k})^{2^1}, \dots, (\mathcal{U}_{k})^{2^{k-1}} \). These circuits are equivalent to those for \( \mathcal{U}_{k}, \mathcal{U}_{k-1}, \dots, \mathcal{U}_{1} = X \). Specifically, 
\begin{equation}
(\mathcal{U}_k)^{2^l} = \mathcal{U}_{k-l} \otimes I^{\otimes l}
\end{equation}   for \( l = 0, \dots, k-1 \), as illustrated in Fig. (\ref{fig:k-l}). Additionally, the circuit decomposition of \(\mathcal{U}_l\) is shown in Fig. (\ref{Ul}).  
\end{Decomposition}

\textbf{Analysis.} We have $(\mathcal{U}_k)^{2^l}=\sum_{m=0}^{2^k-1}|m -2^l \bmod 2^k\rangle\langle m|$. 
The binary form of $2^l$ is the following 
\begin{equation}
\vec{l}=(\underbrace{0, \cdots, 0}_{k-l-1}, 1, \underbrace{0, \cdots, 0}_{l}).
\end{equation} 
We express the binary form of $m$ as $\vec{m}=(m_0\cdots,m_{k-1})$. 
In the binary form, 
\begin{equation}
\begin{aligned}
(\mathcal{U}_k)^{2^l} &= \sum_{m_0,\cdots,m_{k-1}=0,1} |(\vec{m}-\vec{l}) \bmod 2^k\rangle\langle \vec{m}| \\
&=  \sum_{m_0,\cdots,m_{k-1}=0,1}  
|[(m_0\cdots m_{k-l-1})-(\underbrace{0, \cdots, 0}_{k-l-1}, 1)]\bmod 2^{k-l}\rangle\langle m_0\cdots m_{k-l-1}| \otimes  ( |m_{k-l}\cdots m_{k-1}\rangle\langle m_{k-l}\cdots m_{k-1}|)\\
&=\sum_{m=0}^{2^{k-l}-1}|m-1\bmod 2^{k-l}\rangle \langle m| \otimes I^{\otimes l}\\
&= \mathcal{U}_{k-l}\otimes I^{\otimes l}.
\end{aligned}
\end{equation}

 \begin{figure}[h]
    \centering
   \[
    \Qcircuit @C=1em @R=1em {
         & \lstick{q_1}   & \qw & \qw & \multigate{5}{(\mathcal{U}_k)^{2^l}} & \qw & \qw  &&&  &   & \qw& \qw &\multigate{2}{\mathcal{U}_{k-l} } & \qw& \qw\\
         & \vdots    &   &   &  &   &     &&&  &   &  &   & &  &  \\
         & \lstick{q_{k-l} }   & \qw & \qw & \ghost{(\mathcal{U}_k)^{2^l}} & \qw & \qw  &&=  & & &\qw & \qw &\ghost{\mathcal{U}_{k-l}}& \qw& \qw\\
         & \lstick{q_{k-l+1} }   & \qw & \qw & \ghost{(\mathcal{U}_k)^{2^l}} & \qw & \qw  &&   & & &\qw & \qw &\qw& \qw& \qw\\
         & \vdots & &  &   &  &    &&& &   &  &  & \vdots& &  \\
         & \lstick{q_{k} }   & \qw & \qw & \ghost{(\mathcal{U}_k)^{2^l}} & \qw & \qw   &&&  &  & \qw& \qw& \qw& \qw& \qw 
    }
\]
    \caption{Quantum Circuit for \((\mathcal{U}_k)^{2^l}\). Here, \(l\) ranges from \(0\) to \(k-1\). In the circuit on the right, the operation applied to the final \(l\) qubit is the identity operation \(I\). As \(l\) increases, the circuit simplifies.}
    \label{fig:k-l}
\end{figure}
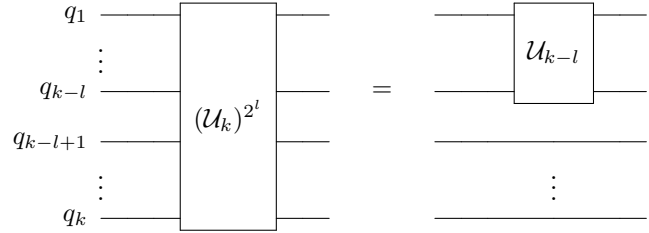

Now we consider the decomposition of $\mathcal{U}_l$, where $l=1,\cdots,k$. 
When $l=1$, $\mathcal{U}_1=\sum_{m=0}^1 |m-1 \bmod 2\rangle\langle m|=|0\rangle\langle 1|+|1\rangle\langle 0|=X$.  

For general \( l \), \(\mathcal{U}_l = \sum_{m=0}^{2^l-1} |m-1 \bmod 2^l\rangle\langle m|\) represents a global shift operation. The classical counterpart of this operation, the basic increment by 1 (`+1'), serves as its foundation. 
Its quantum counterpart is the fundamental shift operation of the Weyl-Heisenberg group. 
The circuit implementation of $\mathcal{U}_l$ is also used in pure state (rank-1 density matrix) QST \cite{wang2022pure}, as depicted in Fig. (\ref{Ul}). \qed
\begin{figure}[!htb]
 \[
 \quad\quad \Qcircuit @C=0.8em @R=0.8em {
\lstick{q_1}  &\multigate{4}{\quad \mathcal{U}_l\quad} &\qw&\\
\lstick{q_2}  &\ghost{\quad \mathcal{U}_l\quad} &\qw &\\
\lstick{\vdots}&  & & \\
\lstick{q_{l-1}}&\ghost{\quad \mathcal{U}_l\quad} &\qw &\\
\lstick{q_{l}}&\ghost{\quad \mathcal{U}_l\quad} &\qw &\\
}
\quad\quad
\Qcircuit @C=1em @R=1em{
& \quad\quad \\
& \quad\quad\\
& \quad\quad\\
& =\quad\quad\\
& \quad\quad\\
}
\quad
\Qcircuit @C=0.8em @R=0.8em {
&\qw  &\multigate{3}{ \mathcal{U}_{l-1}} &\qw&\qw\\
&\qw  &\ghost{\mathcal{U}_{l-1}} &\qw &\qw \\
&  & & \\
&\qw&\ghost{ \mathcal{U}_{l-1}} &\qw &\qw \\
&\gate{X}&\ctrl{-1} &\qw &\qw \\
}\]
\[
\quad
\Qcircuit @C=1em @R=1em{
& \quad\quad \\
& \quad\quad\\
& \quad\quad\\
& =\quad\quad\\
& \quad\quad\\
}
\Qcircuit @C=0.8em @R=0.8em{
    &\qw      &\qw       &\qw   &   \qw &   \targ      &\qw     \\
   &\qw         &\qw       &\qw & \targ &    \ctrl{-1}  &\qw        \\
 &         &          & \cdots  &        &  \\
&\qw      &\targ       &\qw &  \ctrl{-2} &   \ctrl{-2}  &\qw        \\
    &\gate{X} &\ctrl{-1} & \qw &  \ctrl{-1} &   \ctrl{-1}&\qw &
}
\quad
\Qcircuit @C=1em @R=1em{
& \quad\quad \\
& \quad\quad\\
& \quad\quad \\
& =\quad\quad\\
& \quad\quad\\
}
\quad
\Qcircuit @C=0.8em @R=0.8em {
    &   \targ      & \qw      &  \qw  & \qw      & \qw    & \qw \\
   &    \ctrlo{-1} & \targ    &  \qw  & \qw         & \qw  & \qw \\
  &              &          &   \cdots  &        &  \\
 &   \ctrlo{-2}  &  \ctrlo{-2}& \qw     & \targ     & \qw   & \qw    \\
   &   \ctrlo{-1}  &  \ctrlo{-1} & \qw    &  \ctrlo{-1} &\gate{X}   & \qw
}
\]
\caption{Quantum circuit for implementing $\mathcal{U}_{l}$. 
At the top, the relation between circuits of $\mathcal{U}_l$ and $\mathcal{U}_{l-1}$ is given. 
Two implementation methods are introduced. In the circuit, a solid point indicates that the control qubit is in the \(|1\rangle\) state, while a hollow point indicates that the control qubit is in the \(|0\rangle\) state.}
  \label{Ul}
\end{figure}
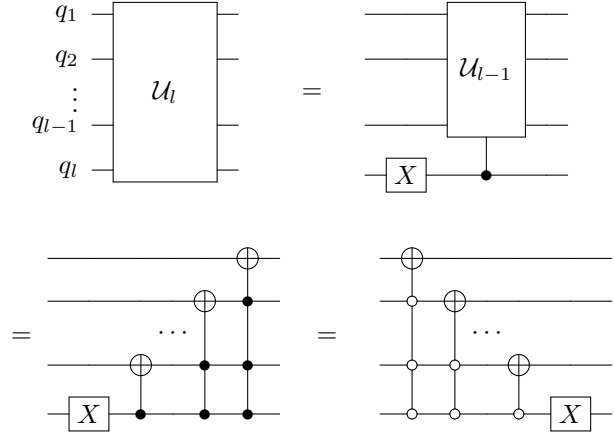

\begin{lemma}
    Each permutation operation required to perform PM on DDBs can be decomposed into at most \(O(n^4)\) elementary 1-qubit and 2-qubit gates.
\end{lemma}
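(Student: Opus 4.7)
The plan is to chain together the three decompositions already developed in Appendix~D and bound the elementary gate count at each level. First I would recall that every permutation operation of interest has the form $P_{j,k} = I^{\otimes (j-1)} \otimes \bigl(|0\rangle\langle 0|\otimes I^{\otimes (n-j)} + |1\rangle\langle 1|\otimes (\mathcal{U}_{n-j})^k\bigr)$ for some $j\in\{1,\dots,n\}$ and $k\in\{0,\dots,2^{n-j}-1\}$, so it suffices to decompose the controlled power $(\mathcal{U}_{n-j})^k$ acting on at most $n-j$ target qubits with one control. The identity factors on the first $j-1$ qubits cost nothing, so the total gate count will ultimately be dominated by the cost of $(\mathcal{U}_{n-j})^k$ plus one extra control line.

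Next I would invoke Decomposition~1 to reduce the (possibly exponential) power $k$ to a product of at most $n-j$ factors of the form $(\mathcal{U}_{n-j})^{2^l}$, selected by the binary digits of $k$. Then Decomposition~2 rewrites each such factor as $\mathcal{U}_{n-j-l}\otimes I^{\otimes l}$, so the whole job is reduced to implementing the unary-shift gates $\mathcal{U}_1,\mathcal{U}_2,\dots,\mathcal{U}_{n-j}$. Using the recursive decomposition of $\mathcal{U}_l$ displayed in Fig.~\ref{Ul}, $\mathcal{U}_l$ is built from $l$ generalized Toffoli gates whose control sizes are $0,1,\dots,l-1$.

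Now I would apply Corollary~7.6 of Barenco \emph{et al.}~\cite{barenco1995elementary}, which implements a generalized $m$-qubit Toffoli in $O(m^2)$ elementary $1$- and $2$-qubit gates. Summing over the $l$ Toffolis in $\mathcal{U}_l$ yields $\sum_{m=0}^{l-1}O(m^2)=O(l^3)$ elementary gates for one shift, hence $\mathcal{U}_{n-j}$ costs $O((n-j)^3)\le O(n^3)$. Multiplying by the at most $n-j$ power factors from Decomposition~1 gives $O((n-j)^4)\le O(n^4)$ elementary gates for $(\mathcal{U}_{n-j})^k$. Finally, the outer control on qubit $j$ adds one extra control line to every Toffoli, which by the same Barenco bound only increases each Toffoli's cost from $O(m^2)$ to $O((m+1)^2)$, preserving the $O(n^4)$ scaling.

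The main obstacle is really a bookkeeping one: one must check that the extra control qubit coming from the outer $|1\rangle\langle 1|$ projector can be absorbed uniformly into every generalized Toffoli without blowing the asymptotic count, and that the $O((n-j)^4)$ estimate remains $O(n^4)$ uniformly in $j\in\{1,\dots,n\}$ (worst case $j=1$). Once these two uniformities are verified, the lemma follows directly. \qed
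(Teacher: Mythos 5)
Your proposal is correct and follows essentially the same route as the paper's proof: both chain Decompositions~1 and~2 to reduce $(\mathcal{U}_{n-j})^{k}$ to at most $n$ controlled shift operations, bound each shift $\mathcal{U}_l$ by $O(l^3)$ via the Barenco et al.\ Corollary~7.6 estimate of $O(m^2)$ per generalized Toffoli, absorb the outer control by enlarging each Toffoli's control set by one, and multiply to get $O(n^4)$. Your explicit check that the extra control and the uniformity over $j$ preserve the asymptotics is exactly the bookkeeping the paper performs implicitly.
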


Analysis: Denote \(\wedge_m(X)\) as the generalized Toffoli gate with \(m+1\) input bits, which maps \(\left|x_1, \ldots, x_m, y\right\rangle\) to \(\left|x_1, \ldots, x_m, \left(\prod_{k=1}^m x_k\right) \oplus y\right\rangle\). On input \((x_1, \ldots, x_k, y)\), the gate applies \(X\) to \(y\) if and only if \(\prod_{k=1}^m x_k = 1\). When \(m = 1\), \(\wedge_1(X)\) corresponds to the Controlled-NOT operation, expressed as \(\wedge_1(X) = |0\rangle\langle 0| \otimes I + |1\rangle\langle 1| \otimes X\).

Thus, the operation \(\mathcal{U}_l\) in Fig. (\ref{Ul}) is a combination of the following gates:
\[
X, \wedge_1(X), \dots, \wedge_{l-1}(X).
\]

According to Corollary 7.6 in \cite{barenco1995elementary}, the \(l\)-qubit gate \(\wedge_{l-1}(X)\) can be decomposed into \(\Theta(l^2)\) elementary 1-qubit and 2-qubit gates. Consequently, \(\mathcal{U}_l\) can be decomposed into \(O(l^3)\) elementary gates. For the controlled operation \( |0\rangle\langle 0| \otimes I + |1\rangle\langle 1| \otimes \mathcal{U}_l \), the cost in elementary gates also scales as \(O(l^3)\), as it involves a combination of \(\wedge_1(X), \dots, \wedge_{l-1}(X), \wedge_l(X)\).

Now, consider a worst-case scenario where the permutation operation \(T_{j,k}\) in Eq. (\ref{Pt}) incurs the maximum cost in terms of elementary gates. This occurs when \(k = n\). For each \(j = 1, \dots, 2^{n-1} - 1\), \(T_{j,k}\) can be implemented with at most \(n\) controlled operations using Decomposition \ref{decom1}. Therefore, the upper bound for decomposing all permutation operations involved in DDBs is \(O(n^4)\). \qed

Using \(l-1\) ancilla qubits, the \(\wedge_{l-1}(X)\) gate can be decomposed into \(O(l)\) elementary gates, as shown in \cite{Nielsen2002}. Consequently, each permutation operation in the DDB circuit can be efficiently implemented using at most \(O(n^3)\) elementary gates with the aid of ancilla qubits.  
It is noteworthy that, with the following decomposition and strategy, the cost of gates could be further slightly reduced. 
\begin{Decomposition}\label{decom3}
    The operations \((\mathcal{U}_l)^j\) and \((\mathcal{U}_l)^{2^l-j}\) can be implemented using the same number of gates, where \(j = 1, \dots, 2^l - 1\).
\end{Decomposition}

Analysis: We have \(\mathcal{U}_l = \sum_{m=0}^{2^l-1} |m-1 \bmod 2^l\rangle \langle m|\). Thus, \((\mathcal{U}_l)^j \cdot (\mathcal{U}_l)^{2^l-j} = I\). Therefore, if we perform the conjugate transpose circuit of \((\mathcal{U}_l)^j\), we obtain the circuit for \((\mathcal{U}_l)^{2^l-j}\). \qed

\textbf{Strategy:} By combining the analysis from Decompositions \ref{decom1} and \ref{decom3}, the circuit decomposition of \((\mathcal{U}_l)^j\) can be further simplified compared to the binary expression. For instance, when \( j = 2^l - 1 \), we should integrate the circuits for \(\mathcal{U}_l\), \((\mathcal{U}_l)^{2^1}\), \(\dots\), and \((\mathcal{U}_l)^{2^{l-1}}\) as described in Decomposition \ref{decom1}. However, according to Decomposition \ref{decom3}, it suffices to implement a single circuit for \((\mathcal{U}_l)^{\dag}\). As a result, the circuit components for \((\mathcal{U}_l)^{2^1}\), \(\dots\), and \((\mathcal{U}_l)^{2^{l-1}}\) can be omitted, leading to a more efficient implementation.

In general, we can define a finite set of integers:
\[
S = \{\pm 1, \pm 2^1, \dots, \pm 2^{l-1}\}.
\]
For any \( j \in \{1, \dots, 2^l - 1\} \), we can identify the minimal elements in the set \( S \) such that their sum equals \( j \). We then decompose \(\mathcal{U}_l^j\) according to these minimal elements in \( S \), rather than just using the binary form corresponding to \(S' = \{1, 2^1, \dots, 2^{l-1}\}\).

 \subsection{Three circuits for arbitrary DM element}

 Now we consider the three circuits of PMs onto DDBs to directly reconstruct arbitrary unknown DM element $\rho_{jk}$, $0\le j <k\le d-1$. 
 The basis for determining the diagonal element is the computational basis measurement,  Pauli measurement $Z_1Z_2\cdots Z_n$.
 The other two circuits are constructed in the following way. 

\begin{itemize}
    \item Write the binary representation of $j$ and $k$, which are $j_1j_2\cdots j_n$ and $k_1k_2\cdots k_n$ respectively. 
    \item Find the first different qubit of $|j_1j_2\cdots j_n\rangle$ and $|k_1k_2\cdots k_n\rangle$. 
    We may as well denote it as $q_s$. Namely, 
    \begin{equation}
    \left\{
    \begin{aligned}
        j&=j_1,\cdots,j_{s-1},j_s=0,j_{s+1},\cdots,j_n\\
        k&=k_1,\cdots,k_{s-1},k_s=1,k_{s+1},\cdots,k_n
    \end{aligned}
    \right.
\end{equation}
      Denote the difference between the binary numbers $j_{s+1},\cdots,j_n$ and $k_{s+1},\cdots,k_n$ as 
      \begin{equation}
          l=\sum_{m=s+1}^n (k_m-j_m)\times 2^{n-m}.  
      \end{equation}
    \item The permutational operation for $\rho_{jk}$ is defined by 
    \begin{equation}
        T_{j,k}=I^{\otimes s-1} \otimes [|0\rangle\langle 0|\otimes I+|1\rangle\langle 1|\otimes (\mathcal{U}_{n-s})^l]. 
    \end{equation}
\end{itemize}

The circuits for the PMs onto the nontrivial DDBs are depicted in Fig. (\ref{fig:rhojk}). When we go through all \(\rho_{jk}\), the required circuit types are \(O(2^n)\) instead of \(O(4^n)\). We can verify the function of the conjugate transpose of the circuits. 

After applying the operation \(H\) at qubit $q_s$, the state \(|j\rangle = |j_1 \cdots j_n\rangle\) evolves to 
\begin{equation}
    |j_1\cdots j_{s-1}\rangle \frac{|0\rangle + |1\rangle}{\sqrt{2}} |j_{s+1}\cdots j_n\rangle.
\end{equation}
Following the conditional permutation operation, the final state becomes 
\begin{equation}
    \frac{|j_1\cdots j_n\rangle + |k_1\cdots k_n\rangle}{\sqrt{2}} = |\phi^{+}_{jk}\rangle. 
\end{equation}
Thus, in the left circuit of Fig. (\ref{fig:rhojk}), if the measurement result is \(j_1,\cdots,j_n\), it corresponds to the projected state \(|\phi^{+}_{jk}\rangle\). Similarly, if the result is \(k_1,\cdots,k_n\), it corresponds to \(|\phi^{-}_{jk}\rangle\). 
In the right circuit, if the measurement result is \(j_1,\cdots,j_n\) (or \(k_1,\cdots,k_n\)), it corresponds to the projected state \(|\psi^{+}_{jk}\rangle\) (or \(|\psi^{-}_{jk}\rangle\)). 

    \begin{figure}[!htb]
  \[
 \Qcircuit @C=0.8em @R=0.8em {
 \lstick{q_1}  &\qw 	 &\qw &\qw  & \qw  &\meter  &&&&&&  &\qw 	 &\qw &\qw  & \qw  &\meter \\  
 \lstick{\vdots}     &&\vdots&&&& & &&&& &&\vdots\\  
\lstick{q_s}  &\qw 	 &\ctrl{1} &\qw  & \qw  &\measureD{\sigma_x}  &&&&&& &\qw 	 &\ctrl{1} &\qw  & \qw  &\measureD{\sigma_y} \\
\lstick{q_{s+1}}     & \qw & \multigate{3}{\mathcal{U}_{n-s}^{ l}} & \qw  & \qw  &\meter  &&&&&&  & \qw & \multigate{3}{\mathcal{U}_{n-s}^{ l}} & \qw  & \qw  &\meter\\
\lstick{\vdots}    &&&&&&   \\
\lstick{q_{n-1}}     & \qw &\ghost{\mathcal{U}_{n-s}^{ l}}& \qw  & \qw  &\meter &&&&&& & \qw &\ghost{\mathcal{U}_{n-s}^{ l}}& \qw  & \qw  &\meter\\
\lstick{q_n}     & \qw& \ghost{\mathcal{U}_{n-s}^{ l}} & \qw & \qw  &\meter  &&&&&& & \qw& \ghost{\mathcal{U}_{n-s}^{ l}} & \qw & \qw  &\meter}
 \]
  \caption{Circuits for PMs to determine \(\rho_{jk}\). At qubit \(q_s\), the measurements are Pauli \(X\) and \(Y\), denoted by \(\sigma_x\) and \(\sigma_y\), respectively. The measurement on the other qubit is a Pauli \(Z\) measurement. The circuit decomposition of \(T_{j,k}\) is discussed in Appendix \ref{sec:circuit}. In the left circuit, the measurement results \(j_1, \cdots, j_n\) and \(k_1, \cdots, k_n\) correspond to the projected state \(|\phi^{\pm}_{jk}\rangle\). In the right circuit, these results correspond to the projected state \(|\psi^{\pm}_{jk}\rangle\). 
Notably, while the projected states \((|\phi^+_{jk}\rangle, |\psi^+_{jk}\rangle)\) can be directly used to reconstruct \(\rho_{jk}\) as described in Eq. (3) of the main text, any pair of projected states—\((|\phi^+_{jk}\rangle, |\psi^-_{jk}\rangle)\), \((|\phi^-_{jk}\rangle, |\psi^+_{jk}\rangle)\), or \((|\phi^-_{jk}\rangle, |\psi^-_{jk}\rangle)\)—is also sufficient for this reconstruction.}
  \label{fig:rhojk}
\end{figure}
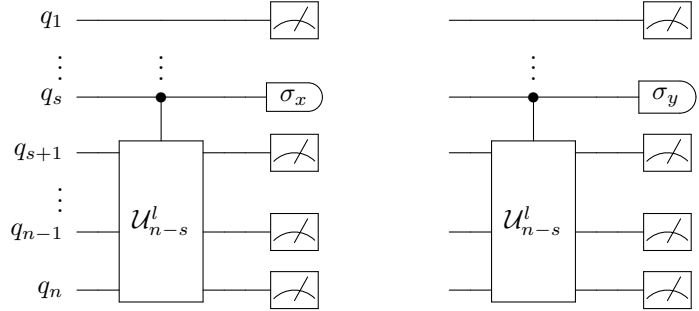

\section{Cloud experiments}
\label{experiment}
To test the performance of our strategy, real experiments were carried out on two quantum computers, superconducting qubits on IBM Quantum Lab and nuclear spins on Spinq.

For quantum chip on ibmq-manila, it is with a one-dimensional structure and 32 quantum volume, shown in SFig.(\ref{ibmq}). Table. (\ref{ibmq-manila parameters}) has its detailed parameters.
Only qubits 0 and 1 are used, with frequencies at 4.963 Ghz and 4.838 Ghz, with anharmonicity -0.34335 Ghz and -0.34621 Ghz, respectively. The error of CNOT gate from 0 on 1 is 6.437e-3 which costs 277.333ns, while the error of CNOT gate from 1 on 0 is 6.437e-3 which costs 312.889ns.

\begin{figure*}[!ht]
  \begin{center}
    \includegraphics[width=0.4\textwidth]{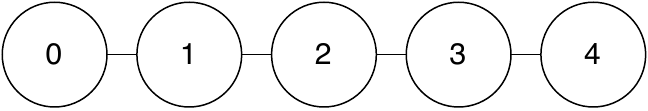}
    \caption{Structure for quantum chip on ibmq-manila: five superconducting transmon qubits. Only qubits $0$ and $1$, which are at $4.963$GHz and $4.838$GHz are employed, and its CNOT error is $6.437e$-$3$.}
    \label{ibmq}
    \end{center}
  \end{figure*}

\begin{table*}[!ht]
  \resizebox{\textwidth}{!}{%
  \begin{tabular}{|c|c|c|c|c|c|c|c|c|c|c|c|c|c|}
    \hline
Qubit & T1  &T2  &Frequency &Anharmonicity  &Readout  &Prob meas 0  &Prob meas 1  &Readout length  &ID   &  $\sqrt{x}$ (sx)   &Single-qubit  &CNOT   &Gate time  \\ 
 &  (us) & (us) & (GHz) & (GHz) & assignment error  & prep 1  & prep 0  & (ns) & error  &  error  & Pauli-X error  & error  & (ns) \\ 
\hline 
\multirow{2}{*}{Q0} &\multirow{2}{*}{130.3} &\multirow{2}{*}{91.35} &\multirow{2}{*}{4.963} &\multirow{2}{*}{-0.34335} &\multirow{2}{*}{0.0415} &\multirow{2}{*}{0.0644} &\multirow{2}{*}{0.0186} &\multirow{2}{*}{5351.111} &\multirow{2}{*}{0.0003787} &\multirow{2}{*}{0.0003787} &\multirow{2}{*}{0.0003787} &\multirow{2}{*}{$0_1:6.437e-3$} &\multirow{2}{*}{$0_1:277.333$}   \\ 
& & & & & & & & & & & & &  \\ \hline
\multirow{2}{*}{Q1} &\multirow{2}{*}{171.29} &\multirow{2}{*}{96.56} &\multirow{2}{*}{4.838} &\multirow{2}{*}{-0.34621} &\multirow{2}{*}{0.0206} &\multirow{2}{*}{0.0286} &\multirow{2}{*}{0.0126} &\multirow{2}{*}{5351.111} &\multirow{2}{*}{0.0001711} &\multirow{2}{*}{0.0001711} &\multirow{2}{*}{0.0001711} &$1_2:8.427e-3$ &$1_2:469.333$  \\ 
& & & & & & & & & & & &$1_0:6.437e-3 $&$1_0:312.889$  \\ \hline
\multirow{2}{*}{Q2} &\multirow{2}{*}{160.5} &\multirow{2}{*}{25.48} &\multirow{2}{*}{5.037} &\multirow{2}{*}{-0.34366} &\multirow{2}{*}{0.0185} &\multirow{2}{*}{0.0278} &\multirow{2}{*}{0.0092} &\multirow{2}{*}{5351.111} &\multirow{2}{*}{0.0002364} &\multirow{2}{*}{0.0002364} &\multirow{2}{*}{0.0002364} &$2_3:6.694e-3$ &$2_3:355.556$  \\  
 & & & & & & & & && & &$2_1:8.427e-3 $&$2_1:504.889$  \\  \hline
 \multirow{2}{*}{Q3} &\multirow{2}{*}{138.87} &\multirow{2}{*}{58.76} &\multirow{2}{*}{4.951} &\multirow{2}{*}{-0.34355} &\multirow{2}{*}{0.0213} &\multirow{2}{*}{0.0318} &\multirow{2}{*}{0.0108} &\multirow{2}{*}{5351.111} &\multirow{2}{*}{0.0002015} &\multirow{2}{*}{0.0002015} &\multirow{2}{*}{0.0002015} &$3_4:1.100e-2$ &$3_4:334.222$   \\  
 & & & & & & & & & & & &$3_2:6.694e-3 $&$3_2:391.111$   \\  \hline
 \multirow{2}{*}{Q4} &\multirow{2}{*}{143.02} &\multirow{2}{*}{46.67} &\multirow{2}{*}{5.066} &\multirow{2}{*}{-0.34211} &\multirow{2}{*}{0.0208} &\multirow{2}{*}{0.0308} &\multirow{2}{*}{0.0108} &\multirow{2}{*}{5351.111} &\multirow{2}{*}{0.0002743} &\multirow{2}{*}{0.0002743} &\multirow{2}{*}{0.0002743} &\multirow{2}{*}{$4_3:1.100e-2$} &\multirow{2}{*}{$4_3:298.667$}\\
 & & & & & & & & & & & & &  \\\hline
  \end{tabular}
  }
  \caption{ibmq-manila parameters}
  \label{ibmq-manila parameters}
\end{table*}

For entire experiments, states such as 
\begin{eqnarray}
  \ket{\psi_{1i}}&=&\alpha_i|0\rangle^{\otimes 2}+\beta_i|1\rangle^{\otimes 2} \label{S21} \\
  \ket{\psi_{2i}}&=&(\alpha_i|0\rangle+\beta_i|1\rangle)^{\otimes 2}
  \label{S22}
\end{eqnarray}
are tested, with i=1,...,21 and $\alpha_i=\cos(\theta_i/2)$, $\beta_i=\sin(\theta_i/2)$, $\theta_i= (i-1) \pi/20$. 
The prepared circuits are depicted in SFig.(\ref{2qubit_circuit}).
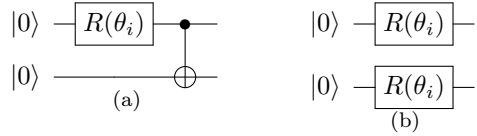
\begin{figure}[htbp]
  \centering
  \subfigure[]{
  \begin{minipage}[t]{0.2\linewidth}
  \centering
  \[
    \Qcircuit @C=0.8em @R=0.8em {
      \lstick{\ket{0}}  &\gate{R(\theta_i)} & \ctrl{1} & \qw\\
      \lstick{\ket{0}}  &\qw &\targ &\qw \\
  }
   \]
  %\caption{fig1}
  \end{minipage}%
  }%
  \subfigure[]{
  \begin{minipage}[t]{0.2\linewidth}
  \centering
   \[
    \Qcircuit @C=0.8em @R=0.8em {
      \lstick{\ket{0}}   & \gate{R(\theta_i)} & \qw \\
      \lstick{\ket{0}}  & \gate{R(\theta_i)} &   \qw}
  \]
  %\caption{fig2}
  \end{minipage}%
  }%
  \centering
  \caption{ Quantum circuits to prepare the test states. (a) is for $\ket{\psi_{1i}}$ and (b) is for  $\ket{\psi_{2i}}$, where $R(\theta_i)=\begin{pmatrix}
    \cos(\theta_i/2)) & -\sin(\theta_i/2))\\
    \sin(\theta_i/2)) & \cos(\theta_i/2))
  \end{pmatrix}$
  } \label{2qubit_circuit}
\end{figure}
With the construction method, $7$ measurement circuits are generated, which is shown in SFig.(\ref{ibm2qubit}).

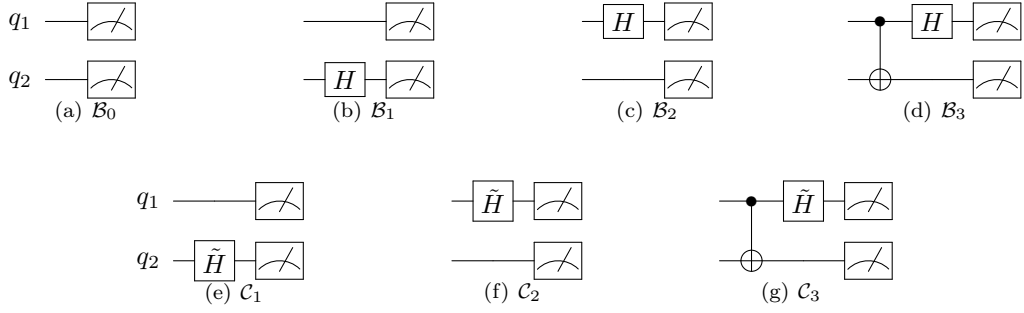
\begin{figure}[htbp]
  \centering
  \subfigure[$\mathcal{B}_0$]{
  \begin{minipage}[t]{0.2\linewidth}
  \centering
  \[
    \Qcircuit @C=0.8em @R=0.8em {
  \lstick{q_1}                      &\qw     &\meter\\
  \lstick{q_{2}}                    &\qw &\meter\\
  }
   \]
  %\caption{fig1}
  \end{minipage}%
  }%
  \subfigure[$\mathcal{B}_1$]{
  \begin{minipage}[t]{0.2\linewidth}
  \centering
   \[
    \Qcircuit @C=0.8em @R=0.8em {
       & \qw &  \meter \\
     & \gate{H} &  \meter }
  \]
  %\caption{fig2}
  \end{minipage}%
  }%
  \subfigure[$\mathcal{B}_2$]{
  \begin{minipage}[t]{0.2\linewidth}
  \centering
  \[
  \Qcircuit @C=0.8em @R=0.8em {
     & \gate{H} &  \meter  \\
    & \qw   &  \meter }
  \]
  %\caption{fig2}
  \end{minipage}
  }%
  \subfigure[$\mathcal{B}_3$]{
  \begin{minipage}[t]{0.2\linewidth}
  \centering
  \[
   \Qcircuit @C=0.8em @R=0.8em {
    &\ctrl{1}  & \gate{H} &  \meter  \\
    &\targ   &\qw &  \meter }
   \]
  %\caption{fig2}
  \end{minipage}
  }%
  
  \centering
  \subfigure[$\mathcal{C}_1$]{
  \begin{minipage}[t]{0.2\linewidth}
  \centering
   \[
    \Qcircuit @C=0.8em @R=0.8em {
      \lstick{q_1}                      &\qw     &\meter\\
      \lstick{q_{2}}                    &\gate{\tilde{H}}&\meter\\
      }
  \]
  %\caption{fig1}
  \end{minipage}%
  }%
  \centering
  \subfigure[$\mathcal{C}_2$]{
  \begin{minipage}[t]{0.2\linewidth}
  \centering
  \[
  \Qcircuit @C=0.8em @R=0.8em {
     & \gate{\tilde{H}} &  \meter  \\
    & \qw   &  \meter }
  \]
  %\caption{fig1}
  \end{minipage}%
  }%
  \centering
  \subfigure[$\mathcal{C}_3$]{
  \begin{minipage}[t]{0.2\linewidth}
  \centering
  \[
   \Qcircuit @C=0.8em @R=0.8em {
   &\quad &\ctrl{1}  & \gate{\tilde{H}} &  \meter  \\
   &\quad &\targ   &\qw &  \meter }
   \]
  %\caption{fig1}
  \end{minipage}%
  }%
  \centering
  \caption{The circuits for measurement. 
  The measurement on each qubit is the projective measurement onto basis $\{|0\rangle,|1\rangle\}$. 
  In front of the circuits, the input quantum states to be determined is ignored. 
  $H$ is the hadamard gate while $\tilde{H}$ is $U(\frac{\pi}{2},0,\frac{\pi}{2})$ in ibmq's setting.
%The measurement at each qubit is the canonical one, Pauli measurement $\sigma_z$.
} 
\label{ibm2qubit}
\end{figure}

For the entire experiment, two sets were conducted. The first is through our strategy with data analysis of direct calculation and SDP. The second is the standard tomography protocol. The main results such as Frobenius's distance and fidelities are calculated in the manuscript. Here we list the density matrix for each experiment-prepared state as supplementary. 
SFig.(\ref{dm1}) and SFig.(\ref{dm2}) are from Eq.(\ref{S21}) and Eq.(\ref{S22}), respectively. Although 21 experiments were conducted, only 11 density matrices are listed from $0$ to $\pi$.
SFig.(\ref{dm1}) (SFig.(\ref{dm2})) is divided into two lines, the first line is the real parts of the density matrix and the second line is the image parts.
Meanwhile, transparency parts are theoretical values, and solid parts are from experiments. 
 \begin{figure*}[!h]
  \begin{center}
    \includegraphics[width=1.\textwidth]{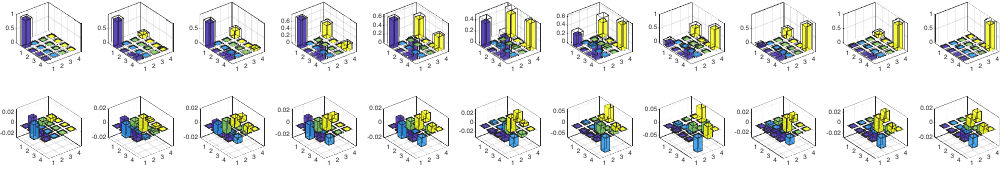}
    \caption{Density matrix for $\alpha_i|0\rangle^{\otimes 2}+\beta_i|1\rangle^{\otimes 2}$, where $\alpha_i=\cos((i-1)*\pi/10)$ is presented.The first line is the real part while the second line is the image part. The transparency part is the theoretical comparison. }
    \label{dm1}
    \end{center}
  \end{figure*}

  \begin{figure*}[!ht]
    \begin{center}
      \includegraphics[width=1.\textwidth]{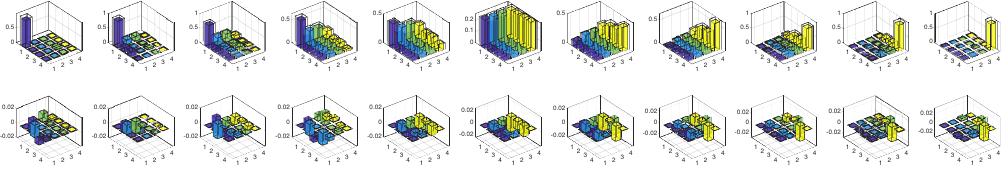}
      \caption{Density matrix for $(\alpha_i|0\rangle+\beta_i|1\rangle)^{\otimes 2}$, where $\alpha_i=\cos((i-1)\pi/10)$ is presented.The first line is the real part while the second line is the image part. The transparency part is the theoretical comparison. }
      \label{dm2}
      \end{center}
    \end{figure*}

As for the cloud quantum computer of SpinQ, it is a liquid NMR-based architecture, that uses crotonic acid as their qubit system. 
As it is shown in SFig.(\ref{molecule_spinq}). 4 carbon nuclei are denoted as 4 qubits, where related parameters are listed in the table.
With an external programmable radio-frequency pulse as a control field, almost 4-qubit quantum logic gates can be achieved. 
In the table, diagonal elements are frequencies, while off-diagonal elements are J-couplings, which are all measured at room temperature. 

\begin{figure*}[!ht]
  \begin{center}
    \includegraphics[width=0.6\columnwidth]{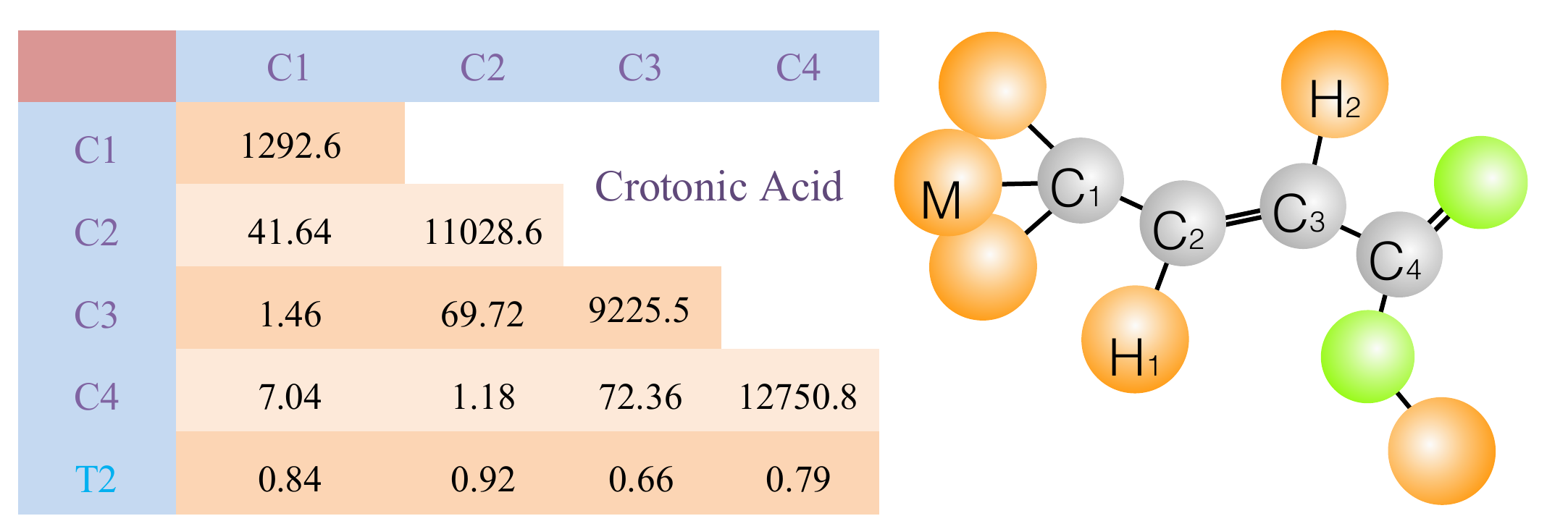}
    \caption{Structure for quantum chip on Spinq: four carbon nuclear spins. In the table, diagonal elements are frequencies, and off-diagonal elements are J-couplings.} 
    \label{molecule_spinq}
    \end{center}    
  \end{figure*}

In order to demonstrate our strategy, $30$ measurement circuits are generated, which is shown in SFig.(\ref{4-qubit}). $\mathcal{B}_0$ is ignored here as it is a trivial one as conventional tomography strategy. However, we do not realize the following circuits, since some entangled gates are out of the current device's capability. Decoherence time doesn't allow us more quantum gates. 
Thus, we have to simulate circuits in SFig.(\ref{4-qubit}) by decomposing each measurement basis into Pauli and summing them up in the end, indirectly completing the proposal.  

\begin{figure}[htbp]
  \subfigure[$\mathcal{B}_1$]{
  \begin{minipage}[t]{0.05\linewidth}
  \centering
   \[
    \Qcircuit @C=0.8em @R=0.8em {
      \lstick{q_1}             &\meter\\
      \lstick{q_2}             &\meter \\
      \lstick{q_3}             &\meter \\
      \lstick{q_4}              &\measureD{\sigma_x} \\ }
  \]
  \end{minipage}%
  }%
  \subfigure[$\mathcal{B}_2$]{
  \begin{minipage}[t]{0.08\linewidth}
  \centering
  \[
  \Qcircuit @C=0.8em @R=0.8em {
                    &\meter\\
                   &\meter \\
                   &\measureD{\sigma_x} \\
                    &\meter \\}
  \]
  \end{minipage}
  }%
  \subfigure[$\mathcal{B}_3$]{
    \begin{minipage}[t]{0.08\linewidth}
    \centering
    \[
      \Qcircuit @C=0.8em @R=0.8em {
    \lstick{}              &\qw        &\meter\\
    \lstick{}              &\qw       &\meter \\
      \lstick{}            &\ctrl{1}         &\measureD{\sigma_x} \\
        \lstick{}           &\targ         &\meter \\
    }
     \]
    \end{minipage}%
    }%
  \subfigure[$\mathcal{B}_4$]{
  \begin{minipage}[t]{0.08\linewidth}
  \centering
   \[
    \Qcircuit @C=0.8em @R=0.8em {
                    &\meter\\
              &\measureD{\sigma_x} \\
                     &\meter \\
                  &\meter \\ }
  \]
  \end{minipage}%
  }%
  \subfigure[$\mathcal{B}_5$]{
    \begin{minipage}[t]{0.12\linewidth}
    \centering
    \[
      \Qcircuit @C=0.8em @R=0.8em {
             &\qw       &\meter\\
           &\ctrl{1}      &\measureD{\sigma_x} \\
         &\multigate{1}{  \mathcal{U}^1_{2}}         &\meter \\
       &\ghost {  \mathcal{U}^1_{2}}             &\meter \\
    }
     \]
    \end{minipage}% 
    }%
    \subfigure[$\mathcal{B}_6$]{
      \begin{minipage}[t]{0.12\linewidth}
      \centering
       \[
        \Qcircuit @C=0.8em @R=0.8em {
                &\qw          &\meter\\
             &\ctrl{1}           &\measureD{\sigma_x} \\
            &\multigate{1}{  \mathcal{U}^2_{2}}          &\meter \\
          &\ghost {  \mathcal{U}^2_{2}}            &\meter \\ }
      \]
      \end{minipage}%
      }%  
  \subfigure[$\mathcal{B}_7$]{
  \begin{minipage}[t]{0.12\linewidth}
  \centering
  \[
  \Qcircuit @C=0.8em @R=0.8em {
          &\qw         &\meter\\
         &\ctrl{1}        &\measureD{\sigma_x} \\
          &\multigate{1}{  \mathcal{U}^3_{2}}               &\meter \\
          &\ghost {  \mathcal{U}^3_{2}}              &\meter \\}
  \]
  \end{minipage}
  }%    
  \subfigure[$\mathcal{B}_8$]{
  \begin{minipage}[t]{0.08\linewidth}
  \centering
  \[
   \Qcircuit @C=0.8em @R=0.8em {
                    &\measureD{\sigma_x}\\
           &\meter \\
                  &\meter \\
                    &\meter \\}
   \]
  \end{minipage}
  }%
  \subfigure[$\{\mathcal{B}_j;j=9,\cdots,15\}$ with $k=1,\cdots,7$]{
    \begin{minipage}[t]{0.15\linewidth}
    \centering
      \[
      \Qcircuit @C=0.8em @R=0.8em {
    \lstick{}             &\ctrl{1}         &\measureD{\sigma_x}\\
    \lstick{}           &\multigate{2}{ ~~~~ \mathcal{U}^k_{3}~~~}         &\meter \\
      \lstick{}         &\ghost {  ~~~~ \mathcal{U}^k_{3}~~~}         &\meter \\
        \lstick{}        &\ghost { ~~~~ \mathcal{U}^k_{3}~~~}             &\meter \\
    }
     \]
    \end{minipage}%
    }%
  \\
    \centering
  \subfigure[$\mathcal{C}_1$]{
  \begin{minipage}[t]{0.05\linewidth}
  \centering
   \[
    \Qcircuit @C=0.8em @R=0.8em {
      \lstick{q_1}     &\meter\\
      \lstick{q_2}    &\meter \\
      \lstick{q_3}      &\meter \\
      \lstick{q_4}    &\measureD{\sigma_y} \\ }
  \]
  \end{minipage}%
  }%
  \centering
  \subfigure[$\mathcal{C}_2$]{
  \begin{minipage}[t]{0.08\linewidth}
  \centering
  \[
  \Qcircuit @C=0.8em @R=0.8em {
                      &\meter\\
                &\meter \\
                   &\measureD{\sigma_y} \\
                      &\meter  }
  \]
  \end{minipage}%
  }%
  \subfigure[$\mathcal{C}_3$]{
    \begin{minipage}[t]{0.08\linewidth}
    \centering
     \[
      \Qcircuit @C=0.8em @R=0.8em {
              &\qw        &\meter\\
             &\qw         &\meter \\
           &\ctrl{1}         &\measureD{\sigma_y} \\
               &\targ         &\meter \\ }
    \]
    \end{minipage}%
    }%
  \centering
  \subfigure[$\mathcal{C}_4$]{
  \begin{minipage}[t]{0.08\linewidth}
  \centering
   \[
    \Qcircuit @C=0.8em @R=0.8em {
                    &\meter\\
                  &\measureD{\sigma_y} \\
                   &\meter \\
                    &\meter \\ }
  \]
  \end{minipage}%
  }%
  \subfigure[$\mathcal{C}_5$]{
    \begin{minipage}[t]{0.12\linewidth}
    \centering
     \[
      \Qcircuit @C=0.8em @R=0.8em {
           &\qw         &\meter\\
             &\ctrl{1}           &\measureD{\sigma_y} \\
             &\multigate{1}{  \mathcal{U}^1_{2}}        &\meter \\
             &\ghost {  \mathcal{U}^1_{2}}          &\meter \\ }
    \]
    \end{minipage}%
    }%  
  \subfigure[$\mathcal{C}_6$]{
  \begin{minipage}[t]{0.12\linewidth}
  \centering
  \[
  \Qcircuit @C=0.8em @R=0.8em {
          &\qw            &\meter\\
       &\ctrl{1}            &\measureD{\sigma_y} \\
        &\multigate{1}{  \mathcal{U}^2_{2}}                 &\meter \\
       &\ghost {  \mathcal{U}^2_{2}}            &\meter  }
  \]
  \end{minipage}%
  }%
  \subfigure[$\mathcal{C}_7$]{
  \begin{minipage}[t]{0.12\linewidth}
  \centering
   \[
    \Qcircuit @C=0.8em @R=0.8em {
           &\qw             &\meter\\
       &\ctrl{1}              &\measureD{\sigma_y} \\
        &\multigate{1}{  \mathcal{U}^3_{2}}               &\meter \\
        &\ghost {  \mathcal{U}^3_{2}}              &\meter \\ }
  \]
  \end{minipage}%
  }%
  \subfigure[$\mathcal{C}_8$]{
  \begin{minipage}[t]{0.08\linewidth}
  \centering
  \[
   \Qcircuit @C=0.8em @R=0.8em {
                   &\measureD{\sigma_y}\\
                  &\meter \\
                  &\meter \\
                  &\meter  }
   \]
  \end{minipage}%
  }%
  \subfigure[$\{\mathcal{C}_j;j=9,\cdots,15\}$ with $k=1,\cdots,7$]{
  \begin{minipage}[t]{0.15\linewidth}
  \centering
   \[
    \Qcircuit @C=0.8em @R=0.8em {
   \lstick{}             &\ctrl{1}           &\measureD{\sigma_y}\\
  \lstick{}             &\multigate{2}{~~~~ \mathcal{U}^k_{3}~~~}         &\meter \\
    \lstick{}             &\ghost {  ~~~~ \mathcal{U}^k_{3}~~~}         &\meter \\
      \lstick{}            &\ghost {  ~~~~ \mathcal{U}^k_{3}~~~}          &\meter \\ }
  \]
  \end{minipage}%
  }%  
\caption{Circuits for 4-qubit tomography.  
%There are 9 circuits with no entanglement, 2 circuits with 2-qubit entanglement, 6 circuits with 3-qubit entanglement, and 20 circuits with 4-qubit entanglement.  
%The operation $\mathcal{V}_{j}$ is an $j$-qubit operation, $\mathcal{V}_{j}=(\mathcal{U}_{j})^{\dag}=\sum_{t=0}^{2^j-1}|t-1\bmod 2^{j}\rangle\langle t|$, $\mathcal{V}_{j}^k=(\mathcal{V}_{j})^k$.
The symbols $\sigma_x,\sigma_y$ mean Pauli measurement $X$, $Y$. While the remaining measurement at each qubit is Pauli $Z$ measurement, projected to $\{|0\rangle,|1\rangle\}$. } 
  \label{4-qubit}
\end{figure}
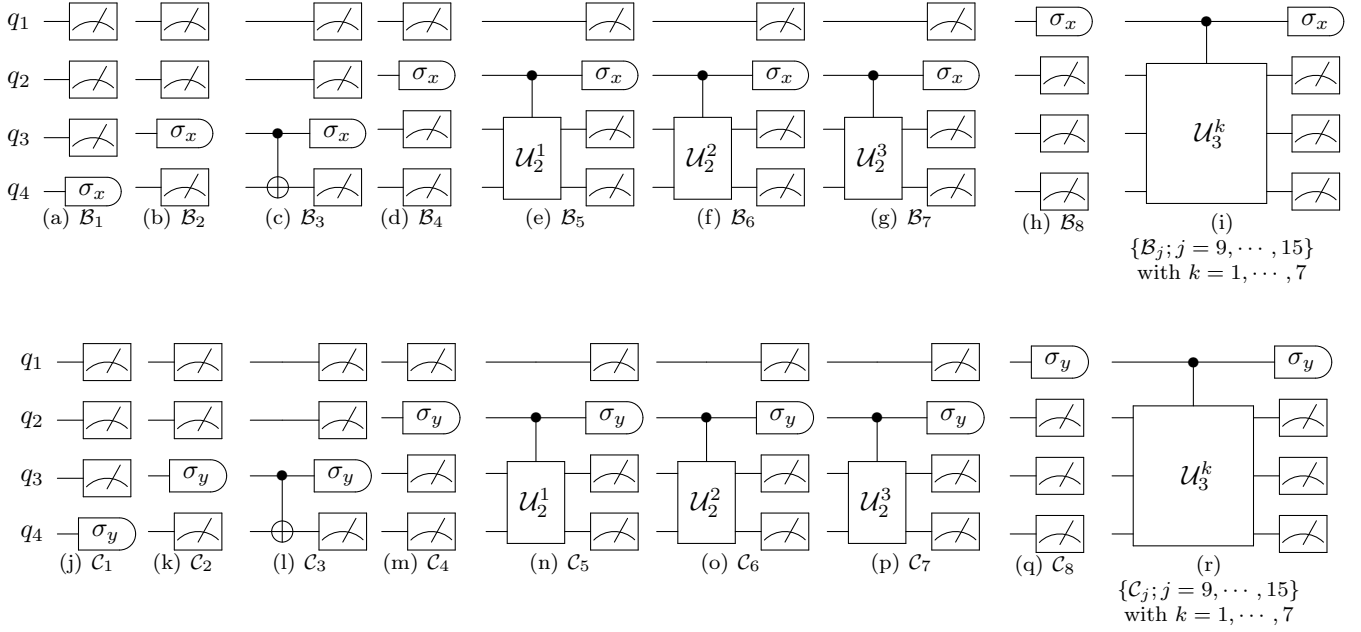

Therefore, For entire experiments, states such as 
\begin{eqnarray}
  \ket{\psi_{1i}}&=&\alpha_i|0\rangle^{\otimes 4}+\beta_i|1\rangle^{\otimes 4} \label{Spinq1} \\
  \ket{\psi_{2i}}&=&(\alpha_i|0\rangle+\beta_i|1\rangle)^{\otimes 4}
  \label{Spinq2}
\end{eqnarray}
are tested, with i=1,...,11 and $\alpha_i=\cos(\theta_i/2)$, $\beta_i=\sin(\theta_i/2)$, $\theta_i= (i-1) \pi/10$. 
The prepared circuits are depicted in SFig.(\ref{4qubit_circuit}).
\begin{figure}[!h]
  \centering
  \subfigure[]{
  \begin{minipage}[t]{0.2\linewidth}
  \centering
  \[
    \Qcircuit @C=0.8em @R=1.1em {
      \lstick{\ket{0}}  &\gate{R(\theta_i)} & \ctrl{1} & \ctrl{2} &\ctrl{3}&\qw\\
      \lstick{\ket{0}}  &\qw &\targ &\qw &\qw &\qw \\
      \lstick{\ket{0}}  &\qw &\qw &\targ &\qw &\qw\\
      \lstick{\ket{0}}  &\qw &\qw &\qw &\targ &\qw\\
  }
   \]
  %\caption{fig1}
  \end{minipage}%
  }%
  \subfigure[]{
  \begin{minipage}[t]{0.2\linewidth}
  \centering
   \[
    \Qcircuit @C=0.8em @R=0.4em {
      \lstick{\ket{0}}   & \gate{R(\theta_i)} & \qw \\
      \lstick{\ket{0}}  & \gate{R(\theta_i)} &   \qw\\
      \lstick{\ket{0}}   & \gate{R(\theta_i)} & \qw \\
      \lstick{\ket{0}}  & \gate{R(\theta_i)} &   \qw
      }
  \]
  %\caption{fig2}
  \end{minipage}%
  }%
  \centering
  \caption{ Quantum circuits to prepare the test states. (a) is for $\ket{\psi_{1i}}$ and (b) is for  $\ket{\psi_{2i}}$, where $R(\theta_i)=\begin{pmatrix}
    \cos(\theta_i/2)) & -\sin(\theta_i/2))\\
    \sin(\theta_i/2)) & \cos(\theta_i/2))
  \end{pmatrix}$
  } \label{4qubit_circuit}
\end{figure}
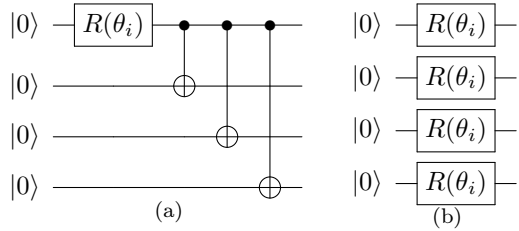
Similarly, results are presented with direct calculation and SDP. As a comparison, standard tomography protocol was also conducted. The main results of Frobenius's distance and fidelities are calculated in the manuscript. Here we only list the density matrix for each experiment-prepared state.

SFig.(\ref{spinq1}) and SFig.(\ref{spinq2}) are from Eq.(\ref{Spinq1}) and Eq.(\ref{Spinq2}), respectively. Although 11 experiments were conducted, only 6 density matrices are listed, where $\theta_i=0, \pi/5, 2\pi/2, 3\pi/5, 4\pi/5, \pi$.
SFig.(\ref{spinq1}) (SFig.(\ref{spinq2})) is divided into two lines, the first line is the real parts of the density matrix and the second line is the image parts.
Meanwhile, transparency parts are theoretical values, and solid parts are from experiments. As GHZ-like states were prepared through 3-CNOT gates, which cost around 100ms, the decoherence affects the states heavily. 

\begin{figure*}[!htbp]
  \begin{center}
    \includegraphics[width=1.\textwidth]{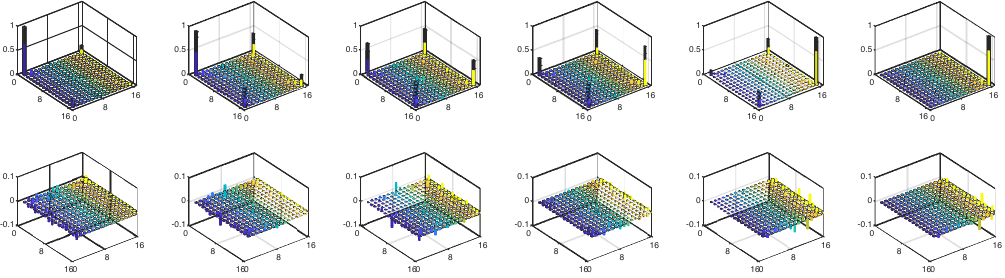}
    \caption{Density matrix for $\alpha_i|0\rangle^{\otimes 2}+\beta_i|1\rangle^{\otimes 2}$, where $\alpha_i=\cos((i-1)*\pi/10)$ is presented.The first line is the real part while the second line is the image part. The transparency part is the theoretical comparison. }
    \label{spinq1}
    \end{center}
  \end{figure*}
  \begin{figure*}[!htbp]
    \begin{center}
      \includegraphics[width=1.\textwidth]{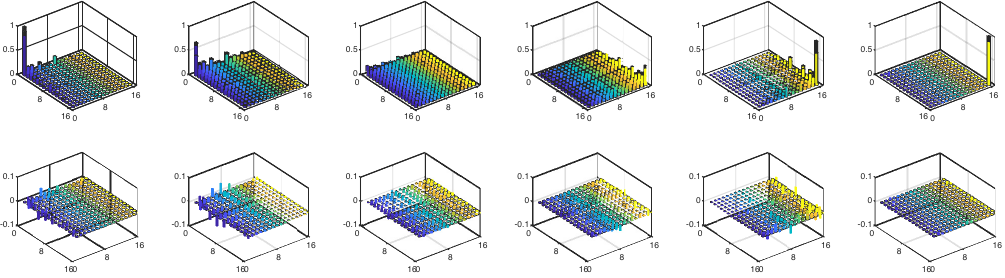}
      \caption{Density matrix for $(\alpha_i|0\rangle+\beta_i|1\rangle)^{\otimes 4}$, where $\alpha_i=\cos((i-1)*\pi/5)$ is presented.The first line is the real part while the second line is the image part. The transparency part is the theoretical comparison. }
      \label{spinq2}
      \end{center}
    \end{figure*}

\subsection*{Table of fidelity for entire experiments}
At the end of this section, we list the table for the entire experiments in Table.(\ref{table_fid}).
\begin{table*}[!ht]
  \resizebox{0.975\textwidth}{!}{%
  \begin{tabular}{|c|c|c|c|c|c|c|c|c|c|c|c|c|c|c|c|c|c|c|c|c|c|c|}
    \hline
 ${T}_1$& 1  &2  &3 &4  &5 &6  &7  &8  &9   & 10  &11 &12  &13 &14  &15 &16  &17  &18  &19   & 20  &21  \\ 
 \hline 
 Method$_d$& 0.963 &	0.975 &	0.983 &	0.980 &	0.976 &	0.978 &	0.970 &	0.966 &	0.962 &	0.954 &	0.956 &	0.947 &	0.945 &	0.945 &	0.943 &	0.933 &	0.931 &	0.922 &	0.925 &	0.915 &	0.907 \\ \hline
 Method$_s$& 0.970 &	0.980 &	0.987 &	0.986 &	0.987 &	0.981 &	0.980 &	0.981 &	0.982 &	0.984 &	0.983 &	0.984 &	0.986 &	0.983 &	0.981 &	0.979 &	0.964 &	0.951 &	0.939 &	0.928 &	0.918 \\ \hline
 Method$_t$& 0.964 &	0.953 &	0.949 &	0.962 &	0.946 &	0.940 &	0.956 &	0.945 &	0.933 &	0.936 &	0.874 &	0.904 &	0.830 &	0.853 &	0.888 &	0.892 &	0.931 &	0.928 &	0.925 &	0.921 &	0.926  \\ 
 \hline
 \hline
 ${T}_2$& 1  &2  &3 &4  &5 &6  &7  &8  &9   & 10  &11 &12  &13 &14  &15 &16  &17  &18  &19   & 20  &21  \\ 
 \hline
 Method$_d$& 0.909 & 0.956 & 0.957 &	0.955 &	0.959 &	0.957 &	0.955 &	0.955 &	0.954 &	0.947 &	0.949 &	0.954 &	0.941 &	0.940 &	0.951 &	0.941 &	0.934 &	0.928 &	0.920 &	0.918 &	0.906 \\ \hline
 Method$_s$& 0.929 &	0.978 &	0.984 &	0.989 &	0.982 &	0.988 &	0.984 &	0.983 &	0.989 &	0.990 &	0.982 &	0.980 &	0.976 &	0.970 &	0.969 &	0.961 &	0.950 &	0.941 &	0.930 &	0.927 &	0.914 \\ \hline
 Method$_t$& 0.923 &	0.963 &	0.960 &	0.950 &	0.959 &	0.906 &	0.935 &	0.963 &	0.961 &	0.959 &	0.977 &	0.924 &	0.953 &	0.942 &	0.953 &	0.952 &	0.939 &	0.935 &	0.941 &	0.931 &	0.920 \\ \hline 
 \hline
  \end{tabular}
  }
  \resizebox{0.5\textwidth}{!}{%
  \begin{tabular}{|c|c|c|c|c|c|c|c|c|c|c|c|}
    \hline
 ${F}_1$& 1  &2  &3 &4  &5 &6  &7  &8  &9   & 10 & 11   \\ 
 \hline
 Method$_d$& 0.715 &	0.700 &	0.630 &	0.602 &	0.586 &	0.567 &	0.608 &	0.598 &	0.656 &	0.759 &	0.775 \\ \hline
 Method$_s$& 0.710 &	0.716 &	0.622 &	0.578 &	0.528 &	0.515 &	0.544 &	0.543 &	0.639 &	0.777 &	0.796 \\ \hline
 Method$_t$& 0.664 &	0.641 &	0.562 &	0.540 &	0.518 &	0.516 &	0.527 &	0.504 &	0.557 &	0.704 &	0.700 \\ \hline 
 \hline
 ${F}_2$& 1  &2  &3 &4  &5 &6  &7  &8  &9   & 10 & 11   \\ 
 \hline 
 Method$_d$& 0.963 &	0.962 &	0.960 &	0.943 &	0.886 &	0.912 &	0.946 &	0.968 &	0.980 &	0.891 &	0.973 \\ \hline
 Method$_s$& 0.923 &	0.899 &	0.918 &	0.836 &	0.771 &	0.746 &	0.756 &	0.848 &	0.896 &	0.832 &	0.875 \\ \hline
 Method$_t$& 0.870 &	0.863 &	0.832 &	0.809 &	0.733 &	0.724 &	0.762 &	0.819 &	0.845 &	0.751 &	0.874 \\ \hline
  \end{tabular}
  }
  \caption{Fidelity for entire experiments, Method$_d$ and Method$_s$ are via our protocol with direct reconstruction and semi-definite programming; Method$_t$ is via traditional tomography protocol. $T_j$ and $F_j$ are for the two and four qubits experiment, respectively, where $j=1,2$ are for $\ket{\psi_1i}$ and $\ket{\psi_2i}$} and $i$ is for the  horizontal label. 
  \label{table_fid}
\end{table*}

\section{Error analysis}
A real situation is that measurement bases $\{\mathcal{B}_0, \mathcal{B}_k, \mathcal{C}_k, 1\le k\le d \}$ cannot be perfectly realized. In most cases, we could create very close ones, which cause slight differences when obtaining the measured probability. That is to say, possibly we project an unknown target $\rho$ onto an approximate basis state $\tilde{\ket{\phi_i}}$, instead of the exact one, $\ket{\phi_i}$. 
For certain ideal and realized basis operators, the $i$-th basis states (eigenstates) have such relation,  
\begin{eqnarray}
  \tilde{\ket{\phi_i}}=\ket{\phi_i}+\epsilon_i\ket{e_i},
\end{eqnarray}
where $\ket{\phi_i}$ is the exact $i$-th basis state, $\epsilon_i$ is a constant amplitude, and $\ket{e_i}$ is one random state of haar measure. 
Obviously, vast repeated measurements produce averaged effects,  
\begin{eqnarray}
  \int \ket{e_i} d e_i =0, \quad \int \ket{e_i}\bra{e_i} d e =I/d.
\end{eqnarray}
Therefore, probabilities measured on $\rho$ is with such disturbance,
\begin{eqnarray}
  \mbox{tr}(\rho\tilde{\ket{\phi_i}}\tilde{\bra{\phi_i}})=\frac{1}{1+\epsilon_i^2} \mbox{tr}(\rho\ket{\phi_i}\bra{\phi_i})+\frac{\epsilon_i^2}{1+\epsilon_i^2} \frac{1}{d}.
\end{eqnarray}

As for the procedure to reconstruct a certain density matrix $d$-dimension $\rho$, the trace distance is employed to show the performance of protocols under the above error assumption. Specifically, 
\begin{eqnarray}
  ||\rho-\sigma||_2=\mbox{tr}[(\rho-\sigma)\cdot (\rho-\sigma)^{\dagger}]=\sum_{i,j} \xi_{ij}\xi_{ij}^{\star},
\end{eqnarray}
where $\sigma$ is the reconstructed matrix and $\xi_{ij}=\rho_{ij}-\sigma_{ij}$ and $i,j \in[1, d]$.
Additionally, $\epsilon_i$ are assumed to be at the same level, i.e, $\epsilon$.

Therefore, for diagonal elements, $|\rho_{ii}|$ and its measured deviation,
\begin{eqnarray}
 \sum_{i} \xi_{ii}\xi_{ii}^{\star}&=&\sum_{i}|-\frac{\epsilon^2}{1+\epsilon^2}\rho_{ii}+\frac{\epsilon^2}{1+\epsilon^2} \frac{1}{d}|^2 \nonumber \\
 &\leq& 2 \sum_{i}|\frac{\epsilon^2}{1+\epsilon^2}\rho_{ii}|^2+ 2\sum_{i}|\frac{\epsilon^2}{1+\epsilon^2} \frac{1}{d}|^2 \nonumber \\
 &\sim& \mathcal{O}(\epsilon^4)
\end{eqnarray}
Here, the coefficients ahead are ignored as we assumed $\sum_i |\rho_{ii}|^2$ is bounded.
For off-diagonal elements, 
\begin{eqnarray}
  \rho_{jk} &=&  \mbox{tr}(\rho|k\rangle\langle j|)\nonumber \\
  &=&\mbox{tr}(\rho|\phi_{jk}^{+}\rangle\langle\phi_{jk}^{+}|)-i\mbox{tr}(\rho|\psi_{jk}^{+}\rangle\langle\psi_{jk}^{+}|)-\frac{1-i}{2}(\rho_{kk}+\rho_{jj}), \label{offdiagonal}
\end{eqnarray}
where notations in Eq.(\ref{offdiagonal}) are listed in main text.
As with multivariable derivative formula, $|\xi_{jk}|$ is bounded, where 
\begin{eqnarray}
   |\xi_{jk}|^2&\leq& \Delta^2 \mbox{tr}(\rho|\phi_{jk}^{+}\rangle\langle\phi_{jk}^{+}|)+ \Delta^2\mbox{tr}(\rho|\psi_{jk}^{+}\rangle\langle\psi_{jk}^{+}|)+|\xi_{kk}|^2+|\xi_{jj}|^2,
\end{eqnarray}
and 
\begin{eqnarray}
  \Delta^2 \mbox{tr}(\rho|\phi_{jk}^{+}\rangle\langle\phi_{jk}^{+}|)\leq 2\times (\frac{\epsilon^2}{1+\epsilon^2}\mbox{tr}(\rho|\phi_{jk}^{+}\rangle\langle\phi_{jk}^{+}|))^2+2\times (\frac{\epsilon^2}{1+\epsilon^2} \frac{1}{d})^2.
\end{eqnarray}
Accordingly, an approximate error is evaluated, 
\begin{eqnarray}
  \sum_{j\neq k}|\xi_{jk}|^2 \leq \mathcal{O}(\epsilon^4)\{\sum_{j\neq k} \mbox{tr}(\rho|\phi_{jk}^{+}\rangle\langle\phi_{jk}^{+}|)^2+\sum_{j\neq k} \mbox{tr}(\rho|\psi_{jk}^{+}\rangle\langle\psi_{jk}^{+}|)^2\}+\mathcal{O}(\epsilon^4)
\end{eqnarray}
For normalization condition,  ${\sum_{j\neq k} \mbox{tr}(\rho|\phi_{jk}^{+}\rangle\langle\phi_{jk}^{+}|}\sim \mathcal{O}(1)$. In summary, 
\begin{eqnarray}
  \sum_{j\neq k}|\xi_{jk}|^2 \sim \mathcal{O}(\epsilon^4).
\end{eqnarray}
Specifically, under a random error assumption with the same error strength, the error of the protocol, which is expressed as a distance of measurement,  is in a higher-order formation as with respect to individual measurement devices. 

The above analysis is under the assumption that basis states have a discrepancy $\epsilon$. However, we didn't consider the size of the device. The more qubits involved, the basis states would be less accurate. As $\mathcal{O}(poly(n))$ quantum gates are required to implement specific measurement operators. With a reasonable assumption that each element gate is with a bound error $\varepsilon$, $\epsilon\sim \mathcal{O}(poly(n))\varepsilon)$. Accordingly, the total error caused by measurement setups of our protocol is in polynomials with respect to each individual quantum gate and the size of the system.

\end{document}